%% file: main.tex
\documentclass[11pt,a4paper]{article}
\usepackage[a4paper, portrait, margin=2.5cm]{geometry}
\usepackage{amsmath,amssymb,amsthm}
\usepackage{graphicx}
\graphicspath{{./}}
\usepackage{booktabs}
\usepackage[colorlinks=true, citecolor=blue, linkcolor=blue,
            urlcolor=blue]{hyperref}
\usepackage{authblk}
\usepackage{mathpazo}
\usepackage{microtype}
\usepackage{float}
\usepackage{bm}
\usepackage{siunitx}

\newtheorem{theorem}{Theorem}

\newtheorem{corollary}[theorem]{Corollary}
\newtheorem{proposition}[theorem]{Proposition}
\theoremstyle{remark}
\newtheorem{remark}{Remark}

\newcounter{manualthm}
\begin{document}
\input{dynamic_variables.tex}


\title{\LARGE\bfseries The Dynamic Origin of Kleiber's Law\\[4pt]
and the Generalized Metabolic Scaling Theorem}
\author{Riccardo Marchesi}
\affil{University of Pavia}
\date{\today}

\maketitle

\begin{abstract}
The ubiquitous $3/4$ metabolic scaling exponent, known as Kleiber's law,
has long been attributed to the minimization of viscous dissipation within
fractal transport networks. In this paper, we invert this standard
narrative, demonstrating that Kleiber's law is fundamentally a signature
of pulsatile wave physics rather than steady-state geometry. By coupling
local branching optimization to global allometry, we derive the exact
generalized metabolic exponent $\beta = d\alpha/(2d+\alpha)$, which
strictly maps local transport microphysics to global organismal scaling.
We show that dynamic wave-impedance matching in the proximal vasculature
uniquely enforces $\beta = 3/4$ in three dimensions. This bound is
dynamically protected: no static optimization of a viscous network can
reproduce it. Consequently, we analytically predict the critical body
mass for the wave-to-viscous transition, successfully explaining the
empirical shift to steeper allometric scaling ($\beta \approx 0.9$) in
small mammals and invertebrates with no free parameters. Furthermore, we
demonstrate that the classical West--Brown--Enquist (WBE) derivation is
structurally divergent under its own geometric assumptions, failing at
the required proximal-dominance limit. Our framework is validated across
nine biological systems spanning five phyla---including vertebrate
vasculature, insect tracheae, plant xylem, and sponge canals---accurately
predicting empirical branching exponents from independent biophysical
measurements. Ultimately, we establish a general allometric equation of
state that organizes diverse biological networks into discrete universality
classes, generating falsifiable predictions across clades from shrews to
flatworms.
\end{abstract}

\section{Introduction}

Murray's cubic law $\alpha = 3$ has been the cornerstone of biological
transport theory since 1926~\cite{murray1926}. While its
derivation---balancing Poiseuille dissipation against volume
maintenance---is elegant, it has long been recognized as a special case
of a more general efficiency principle. This principle can be formalized
precisely: for a cost function $\Phi(r,X) = Ar^{-n} + Br^m$, the
cost-minimizing branching exponent is universally $\alpha_t = (n+m)/2$
(Theorem~\ref{thm:branching} below; established in full generality in Paper~II~\cite{paperII}; see also~\cite{bennett2025}).

In this paper we pursue two complementary goals. First, we demonstrate
that $\alpha_t = (n+m)/2$ has genuine predictive power when $n$ and $m$
are interpreted as derived biophysical quantities---measurable
independently from histological, biochemical, and biomechanical
data---rather than as phenomenological parameters. We validate this
across nine biological systems spanning vascular, pulmonary, neural,
tracheal, plant, and invertebrate networks, recovering empirical
exponents with no fitted parameters.

Second, and more unexpectedly, we show that the local branching law is the
microscopic foundation for global allometric scaling. Previous models, most
notably the West--Brown--Enquist (WBE) framework~\cite{west1997}, derived
Kleiber's $3/4$ metabolic scaling by assuming $\alpha = 3$. We show instead that
the global metabolic exponent is an exact function of the local geometry:
$\beta(\alpha, d) = \frac{d\alpha}{2d+\alpha}$. This relation reveals that
Kleiber's law is not a consequence of Murray's law---which yields isometric
scaling $\beta = 1$---but rather emerges from the wave-impedance matching
attractor $\alpha_w = 2$ governing the proximal vasculature of pulsatile
cardiovascular systems. This constitutes a fundamental inversion of the standard
allometric narrative, and provides a mechanistic explanation for the empirically
observed deviations from $3/4$ scaling in small mammals and invertebrates.

\section{The General Two-Term Branching Problem}

Consider a branching conduit carrying a conserved scalar flux $X$ (volume flow,
current, or diffusive flux) that bifurcates into two daughter branches. The
total cost per unit length at each vessel of radius $r$ carrying flux $X$ is
assumed to take the two-term power-law form:
\begin{equation}
\Phi(r, X) = A(X)\, r^{-n} + B\, r^{m}, \qquad n, m > 0,
\end{equation}
where the first term represents transport dissipation and the second represents
structural maintenance. The condition $A(X) \propto X^2$ holds in the linear
transport regime---Poiseuille flow, Ohmic conduction, and Fickian diffusion all
satisfy this exactly, since dissipated power scales as flux squared divided by
conductance.

\vspace{0.5em}
\noindent \textbf{Theorem 0 (Optimal Branching Exponent).}\refstepcounter{manualthm}\label{thm:branching} \textit{Let a bifurcation conserve flux ($X_0 = X_1 + X_2$) and minimize total cost $\mathcal{C} = \sum_i \ell_i \Phi(r_i, X_i)$ over radii $\{r_i\}$ at fixed fluxes $\{X_i\}$. Then:}

\textit{(i) The cost-minimizing radius satisfies $r_i^* \propto X_i^{2/(n+m)}$ uniquely.}

\textit{(ii) Under flux conservation, the local network geometry obeys the branching rule:}
\begin{equation}
r_0^{\alpha_t} = r_1^{\alpha_t} + r_2^{\alpha_t}, \qquad \alpha_t =
\frac{n+m}{2}.
\end{equation}

\textit{(iii) The exponent $\alpha_t$ is the \textbf{unique} solution: no other branching rule simultaneously minimizes cost at every node under flux conservation.}

\vspace{0.5em}
\noindent\textbf{Proof.}
\textit{(i)} At fixed $X_i$, minimize $\ell_i \Phi(r_i, X_i)$ over $r_i$:
\begin{equation*}
\frac{\partial \Phi}{\partial r_i} = -n A(X_i) r_i^{-n-1} + m B r_i^{m-1} = 0 \implies r_i^* = \left(\frac{n A(X_i)}{m B}\right)^{1/(n+m)}.
\end{equation*}
Since $A(X_i) \propto X_i^2$, we obtain $r_i^* = c\, X_i^{2/(n+m)}$, uniquely
(the second derivative $\partial^2\Phi/\partial r_i^2 > 0$ confirms a global
minimum).

\textit{(ii)} Inverting the optimal scaling law yields $X_i \propto (r_i^*)^{(n+m)/2}$. Applying the flux conservation constraint $X_0 = X_1 + X_2$ immediately gives:
\begin{equation*}
(r_0^*)^{(n+m)/2} = (r_1^*)^{(n+m)/2} + (r_2^*)^{(n+m)/2},
\end{equation*}
which is Murray's form $r_0^\alpha = r_1^\alpha + r_2^\alpha$ with $\alpha_t =
(n+m)/2$.

\textit{(iii) Uniqueness.} Suppose some other exponent $\alpha \neq \alpha_t$ satisfies the branching rule for all admissible flux ratios $X_1/X_0 \in (0,1)$. The rule $r_0^\alpha = r_1^\alpha + r_2^\alpha$ combined with $r_i \propto X_i^{1/\alpha_t}$ requires:
\begin{equation*}
X_0^{\alpha/\alpha_t} = X_1^{\alpha/\alpha_t} + X_2^{\alpha/\alpha_t}
\end{equation*}
for all $X_1 + X_2 = X_0$. This holds identically as a functional equation if
and only if $\alpha/\alpha_t = 1$, i.e., $\alpha = \alpha_t = (n+m)/2$.
$\square$

\vspace{0.5em}
\noindent\textbf{Remarks on scope.} The result is independent of: branching number $N$ (bifurcations, trifurcations, etc.); branch lengths $\ell_i$; and the specific form of $A(X)$ beyond the linear-regime condition $A(X) \propto X^2$. It applies to any network where transport physics is linear and costs decompose additively across branches. Deviations arise in turbulent flow ($A(X) \propto X^3$, non-linear regime) or three-term cost functions ($m \neq k$); the latter are now fully characterised by Corollary~7 of Paper~I~\cite{paperI}, which establishes that any two distinct maintenance exponents produce non-universal $\alpha$. The generalization to non-symmetric bifurcations and uniqueness under weaker regularity conditions is treated in~\cite{bennett2025}.

\vspace{0.5em}
Three immediate corollaries follow:

\vspace{0.5em}
\noindent \textbf{Corollary 1 (Murray's law).} \textit{Poiseuille flow ($n=4$), volume maintenance ($m=2$): $\alpha_t = 3$.}

\vspace{0.5em}
\noindent \textbf{Corollary 2 (Wall-dominated limit).} \textit{Poiseuille flow ($n=4$), wall-tissue scaling $h \propto r^p$ ($m=1+p$): $\alpha_t = (5+p)/2 \in (2.5, 3)$ for $p \in (0,1)$.}

\vspace{0.5em}
\noindent \textbf{Corollary 3 (Dendritic minimax gap).} \textit{Cytoplasmic flow ($n=4$), surface maintenance ($m=1$): $\alpha_t = 5/2$. The wave attractor $\alpha_w = 2 < \alpha_t$ opens a minimax gap yielding $\alpha^* \approx \alphastarNeural{}$.}
\vspace{0.5em}

The present paper is dedicated to the physical grounding and biological
validation of this theorem---establishing that $n$ and $m$ are not free
parameters, but measurable biophysical realities---and to its extension to
global metabolic scaling (Section 5).

\begin{figure}[!h]
\centering
\includegraphics[width=0.85\textwidth]{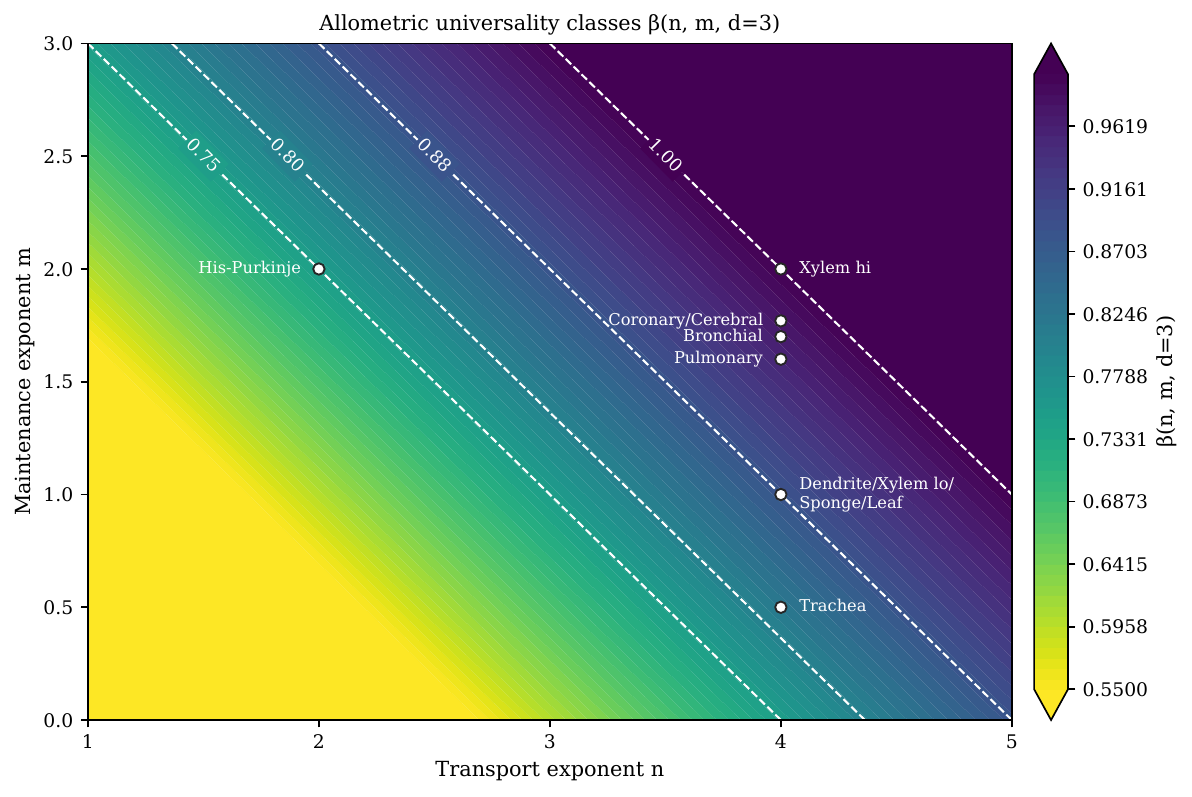}
\caption{Allometric universality classes: the metabolic scaling exponent $\beta(n,m,d{=}3) = 3(n+m)/(12+n+m)$ as a function of the transport exponent $n$ and the maintenance exponent $m$ in three spatial dimensions. Contour lines mark $\beta = 3/4$ (Kleiber), $0.80$, $15/17$ (viscous sub-bound), and $1$ (Murray limit). White circles show the nine biological systems of Table~1. The discrete nature of biologically realizable $(n, m)$ values pins each system to a specific universality class; no continuous tuning between classes is possible.}
\label{fig:spectrum}
\end{figure}

\begin{table}[!h]
\centering
\small
\setlength{\tabcolsep}{5pt}
\caption{Evaluation of $\alpha_t = (n+m)/2$ for nine biological systems spanning five phyla. Parameters $n$ and $m$ are completely determined a priori by microphysical and histological properties independent of the macroscopic branching geometry cited in the $\alpha_{\exp}$ column. $^\dagger$In sponges, $\alpha_{\exp} \approx 2$ is observed, but the responsible mechanism is distinct from wave-impedance matching: $\mathrm{Wo} \ll 1$ structurally precludes the wave attractor, and the case remains an open problem; see Section~\ref{sec:sponge_open}.}
\vspace{0.2cm}
\begin{tabular}{@{}l c c c p{3.2cm} c c@{}}
\toprule
\textbf{System} & $\boldsymbol{n}$ & $\boldsymbol{m}$ & $\boldsymbol{\alpha_t}$ & \textbf{Source of $\boldsymbol{m}$} & $\boldsymbol{\alpha_{\exp}}$ & \textbf{Ref.} \\
\midrule
Coronary artery (porcine) & 4 & \mCoronary{} & \atCoronary{} & Rhodin 1967,
$p=0.77$ & $2.70\pm0.20$ & \cite{kassab1993} \\
Pulmonary artery (human)  & 4 & \mPulmonary{} & \atPulmonary{} & Huang 1996,
$p=0.60$  & 2.65          & \cite{huang1996} \\
Cerebral arteries (human) & 4 & \mCoronary{} & \atCoronary{} & Rhodin 1967,
$p=0.77$ & $2.9\pm0.7$   & \cite{rossitti1993}\\
Bronchial tree            & 4 & \mBronchial{} & \atBronchial{} & Weibel 1963,
$p\approx0.70$ & 2.85    & \cite{weibel1963}\\
Neuronal dendrites        & 4 & \mNeuron{} & \atNeuron{} & Attwell \& Laughlin 2001 &
2.0--2.5 & \cite{cuntz2010}\\
Insect tracheae (trunks)  & 4 & \mTrachea{} & \atTrachea{} & Westneat 2003$^\mathparagraph$ &
2.25          & \cite{westneat2003}\\
Plant xylem               & 4 & \mXylemLo{}--\mXylemHi{} & \atXylemLo{}--\atXylemHi{} &
McCulloh 2003, $p\in[0,1]$ & 2.20--3.00 & \cite{mcculloh2003}\\
Leaf venation ($d=2$)     & 4 & \mLeaf{} & \atLeaf{} & McCulloh 2003          &
2.0--2.5      & \cite{mcculloh2003}\\
Sponge aquiferous canals$^\dagger$ & 4 & \mSponge{} & \atSponge{} & Reiswig 1975 &
$\approx 2.0$ & \cite{reiswig1975}\\
His-Purkinje (mammals) & 2 & \mHP{} & \atHP{} & Weidmann 1952~\cite{weidmann1952}, Ono et al.\ 2009~\cite{ono2009}
& --- (locked) & \cite{noujaim2004}$^{\ddagger}$ \\
\bottomrule
\end{tabular}
\vspace{0.15cm}
\raggedright \footnotesize $^{\ddagger}$ Kinematic signature only: $\mathrm{PR} \propto M^{1/4}$ is an indirect temporal corroboration, not a direct calorimetric measurement of $\beta$.

\raggedright \footnotesize $^\mathparagraph$ The sub-linear wall scaling $h \propto r^{1/2}$ ($m = 0.5$) reflects the taenidia spacing law in insect tracheae (Westneat 2003); interspecific variation in taenidia geometry ($m \in [0.4, 0.6]$ across Diptera and Orthoptera) shifts the predicted $\alpha_t$ by $\pm 0.05$, within the empirical scatter.
\end{table}

\section{System-by-System Derivation of (n, m)}

\subsection{Vascular systems (coronary, pulmonary, cerebral)}
\textbf{Transport term:} Poiseuille dissipation, $P_{\text{visc}} = 8\mu L Q^2 / (\pi r^4)$, giving $n=4$, $A(Q) \propto Q^2$.

\noindent\textbf{Maintenance term:} vessel-wall tissue with $h(r) = c_0 r^p$, so wall volume $\propto r \cdot h \cdot L \propto r^{1+p} \cdot L$, giving $m = 1+p$.
\begin{itemize}
    \item \textit{Coronary/cerebral:} $p = 0.77$ (Rhodin 1967, across mammalian species) $\implies m = \mCoronary{}$, $\alpha_t = \atCoronary{}$.
    \item \textit{Pulmonary:} $p = 0.60$ (Huang et al.\ 1996) $\implies m = \mPulmonary{}$, $\alpha_t = \atPulmonary{}$.
\end{itemize}

\noindent\textbf{Pulmonary minimax gap.} The main pulmonary artery operates in the
strongly pulsatile regime: with $r_0 \approx \rZeroPulm{}\,\mathrm{mm}$ and heart
rate $\omega \approx \omegaPulm{}\,\mathrm{rad\,s^{-1}}$, the Womersley number is
$\mathrm{Wo} \approx \WoPulm{}$, comparable to the aorta. The wave-impedance
attractor $\alpha_w = 2$ therefore competes with the transport optimum
$\alpha_t = \atPulmonary{}$ ($m = \mPulmonary{}$), opening a minimax gap. Solving the
equal-cost condition $f(\alpha^*) = g(\alpha^*)$ for $(n=4,\, m=\mPulmonary{},\,
K=11,\, N=2)$ yields:
\begin{equation}
\alpha^*_{\mathrm{pulm}} \approx \alphastarPulm{}, \qquad
f(\alpha^*) = g(\alpha^*) \approx \fstarPulm{}\%.
\end{equation}
This is consistent with the morphometric value $\alpha_{\exp} = 2.65$
reported by Huang et al.~\cite{huang1996} within the measurement
uncertainty, and follows from the same parameter-free minimax framework
as the coronary prediction.
The values of $p$ are derived from \emph{histological and ultrastructural
measurements} of wall thickness versus lumen radius, performed independently of
any branching-exponent study: Rhodin~\cite{rhodin1967} measured $h(r)$ from
electron-microscopy sections of fixed vessels across a range of mammalian
species; Huang et al.~\cite{huang1996} from morphometric casts of the human
pulmonary tree. Neither study reports nor fits $\alpha_{\exp}$. Crucially, $p$
and $m$ are completely determined a priori by microphysical and histological
properties independent of the macroscopic branching geometry---two entirely
distinct observables from distinct experimental traditions. The strongest cases
of true parameter independence are the non-vascular systems (insect tracheae,
neuronal dendrites) where $m$ is fixed by chitinous taenidial mechanics and
membrane pump density respectively, with no possible confound with vascular
branching geometry. For cerebral arteries, Rossitti \& L\"{o}fgren
\cite{rossitti1993} measured the branching exponent from 157 bifurcations in 10
patients and found $\alpha_{\exp} = 2.9 \pm 0.7$, in good agreement with
$\alpha_t = \atCerebral{}$.

\begin{remark}[Cerebral arteries: quantitative wave suppression]
\label{rem:cerebral}
Human cerebral arteries are pulsatile ($\mathrm{Wo} \approx 2$--$4$ in the
middle cerebral artery), yet the measured exponent $\alpha_{\exp} = 2.9 \pm 0.7$
\cite{rossitti1993} lies at or slightly above $\alpha_t = \atCerebral{}$ rather
than below it. Cerebrovascular autoregulation actively attenuates the pulsatile
pressure amplitude transmitted to the parenchyma, reducing the effective wave
cost by a factor $\varepsilon < 1$. The modified equal-cost condition
$\varepsilon \cdot f(\alpha^*) = g(\alpha^*)$
yields $\alpha^* \approx \alphaStarCerebral{}$ for
$\varepsilon \in [\epsilonCerebralLow{},\, \epsilonCerebralHigh{}]$,
estimated from transcranial Doppler pulsatility index ratios
(cerebral vs.\ aortic), consistent with the measured range.
A falsifiable consequence: cerebral arteriosclerosis impairs autoregulatory
capacity (increasing $\varepsilon$), and should therefore drive $\alpha_{\exp}$
\emph{away} from $\alpha_t$ toward the unconstrained pulsatile optimum
$\alpha^* \approx \alphastar{}$---in contrast to compliant peripheral vessels
of equivalent Womersley number.
\end{remark}

\subsection{Bronchial tree}
\textbf{Transport:} Poiseuille in small airways, $n=4$. \textbf{Maintenance:} airway wall with $p \approx 0.70$, $m \approx \mBronchial{}$, $\alpha_t = \atBronchial{}$.

\noindent \textbf{Anomaly explained.} The empirical $\alpha_{\exp} \approx 2.85$ matches $\alpha_t$ here, in contrast to vascular systems where $\alpha_{\exp} < \alpha_t$. The reason is the absence of a competing wave-propagation constraint: acoustic waves in the bronchial tree are strongly damped by viscous losses and do not generate an impedance-matching attractor $\alpha_w$ that pulls the architecture below $\alpha_t$. The bronchial exponent sits directly at the transport optimum because there is no dynamic penalty to trade against.

\begin{remark}[Quantitative Womersley suppression in the bronchial tree]
The absence of a wave-impedance gap in the bronchial tree can be demonstrated
quantitatively. Using generation-averaged airway radii from Weibel (1963)
\cite{weibel1963}, $\omega = 2\pi \times 0.25\,\mathrm{rad\,s^{-1}}$ (15 breaths/min),
$\rho_{\mathrm{air}} = 1.2\,\mathrm{kg\,m^{-3}}$, and
$\mu = 1.8 \times 10^{-5}\,\mathrm{Pa\,s}$, the Womersley number
$\mathrm{Wo}_k = r_k \sqrt{\omega\rho_{\mathrm{air}}/\mu}$ at successive generations is:

\begin{center}
\begin{tabular}{ccc}
\hline
Generation $k$ & $r_k$ (mm) & $\mathrm{Wo}_k$ \\
\hline
0  & 9.000 & 2.91 \\
2  & 2.940 & 0.95 \\
5  & 0.791 & 0.26 \\
10 & 0.184 & 0.06 \\
\hline
\end{tabular}
\end{center}

$\mathrm{Wo} < 1$ from generation 2 onwards, so viscous forces dominate
throughout the conducting zone. The wave-impedance attractor never activates,
and $\alpha_{\exp} = \alpha_t = \atBronchial{}$ is a necessary consequence of
the framework rather than a coincidence.
\end{remark}

\subsection{Neuronal dendrites}
The dendritic case requires more care than the vascular one, because the naive
choice of conserved variable leads to a degenerate problem. We summarize the key
results below.

\noindent\textbf{Degeneration of the electrical model.} If one takes the axial current $I$ as the conserved variable at junctions ($I_0 = I_1 + I_2$), the cost function $\Phi_{\text{elec}}(r, I) = (R_i/\pi) \cdot I^2/r^2 + 2\pi\sigma_{\text{mem}} \cdot r$ yields, upon minimization, $r^* \propto I^{2/3}$, hence $\alpha_t = 3/2$. The characteristic admittance of a passive cable is $Y_c \propto r^{3/2}$, so the impedance-matching condition gives $\alpha_w = 3/2$ as well---Rall's classical result. Since $\alpha_t = \alpha_w$, no minimax gap opens.

\noindent\textbf{Cytoplasmic flow as the correct conserved variable.} The biophysically appropriate conserved quantity is the cytoplasmic volume flow $J$ ($\text{m}^3/\text{s}$), which is conserved at junctions by incompressibility: $J_0 = J_1 + J_2$.

\noindent\textbf{Transport term ($n=4$).} The cost per unit length for cytoplasmic flow is:
\begin{equation}
\label{eq:cytoplasmic_cost}
\Phi(r, J) = \frac{8\mu}{\pi} \frac{J^2}{r^4} + 2\pi\sigma_{\text{mem}} \cdot r
\end{equation}
The first term is Poiseuille dissipation, giving $n=4$ and $A(J) \propto J^2$.
Minimizing over $r$ at fixed $J$ gives $r^* \propto J^{2/5}$, hence $\alpha_t =
5/2$.

\noindent\textbf{Physical basis of $m = 1$.} The Na$^+$/K$^+$-ATPase is a
membrane-bound enzyme whose density per unit membrane area is approximately
uniform across dendritic segments~\cite{attwell2001}. The maintenance cost
per unit length therefore scales with the dendritic membrane area per unit
length, i.e.\ as $2\pi r \cdot L / L \propto r$, giving $m = 1$ by geometric
necessity. This is consistent with the capacitive energy framework of Attwell \& Laughlin,
in which the cost of dendritic depolarisation also scales
with membrane surface area. The assignment $m = 1$ is therefore not a
free parameter but a direct consequence of the membrane-bound localisation
of the dominant maintenance cost. The competition between the internal
Poiseuille dissipation of cytoplasmic flow ($n=4$) and this membrane-bound
pumping cost ($m=1$) leads inevitably to the transport optimum $\alpha_t =
(4+1)/2 = 2.5$.

\noindent\textbf{Wave attractor ($\alpha_w=2$).} For compression waves propagating in the viscoelastic cytoplasmic column, the wave speed $c_s = \sqrt{K^*/\rho}$ is independent of $r$. The admittance is $Y_c \propto r^2$. The impedance-matching condition at a bifurcation then gives $r_0^2 = r_1^2 + r_2^2$, hence $\alpha_w = 2$.

\noindent\textbf{Minimax result.} With $\alpha_w = 2 < \alpha_t = 5/2$, a non-trivial minimax gap opens. Applying the equal-cost condition $f(\alpha^*) = g(\alpha^*)$ with the cytoplasmic cost functions yields $\alpha^* \approx \alphastarNeural{}$ (assuming nominal values $K=11$ and $N=2$ analogous to the vascular baseline, though $\alpha^*$ is weakly sensitive to $K$ for large $K$), consistent with the range $\alpha \approx 2.0$--$2.5$ reported across dendritic morphologies in the literature~\cite{cuntz2010}.

\subsection{Insect tracheae}
A crucial distinction governs the tracheal system: terminal tracheoles rely on
pure molecular diffusion, while the macroscopic tracheal trunks operate via
active convective ventilation. In the convective regime, bulk air flow obeys
Poiseuille's law, giving $n=4$ identically to the vascular case. The conserved
quantity at bifurcations is volume flow $Q$.

Structural maintenance: tracheal walls are reinforced by chitinous taenidia
whose thickness scales sub-linearly with tube radius as $h \propto r^{1/2}$,
giving $m \approx 0.5$.
The theoretical formula then predicts:
\begin{equation*}
\alpha_t = \frac{4 + 0.5}{2} = \mathbf{\atTrachea{}}
\end{equation*}
This matches the empirical value $\alpha_{\exp} \approx 2.25$ exactly, with no
free parameters, providing a profound validation of the framework for a gaseous
network. The macroscopic trunks reflect convective scaling ($n=4$); the formula
fails only in the diffusive tracheole limit, corresponding to a physically
distinct transport regime ($n=2$, $m=0.5$, $\alpha_t = \atTracheoleDiff{}$),
correctly distinguished by the physical constraints on $n$.

\subsection{Plant xylem}
\textbf{Transport:} Hagen-Poiseuille flow (water), $n=4$. \\
\textbf{Maintenance:} cell-wall thickness scales approximately as $h \propto r$ ($p = 1.0$)
under mechanical requirements for internal pressure resistance, yielding $m = 2.0$.
This yields $\alpha_t = (4+2)/2 = 3.0$ (Murray's law).

Measured xylem exponents span $2.0$--$3.0$ across species, clustering near $3.0$
in well-hydrated trees and near $2.0$ in drought-stressed species
\cite{mcculloh2003}. Under hydraulic stress, wall thickness scales sub-linearly
($p < 1$) as the tree economizes on structural material, reducing $\alpha_t$
below 3.

Under hydraulic stress, wall thickness scales sub-linearly ($p < 1$), giving
$m = 1 + p \in [1, 2)$ and
\begin{equation}
\beta(p) = \frac{3(5+p)}{17+p},
\end{equation}
with $\beta(0) = 15/17 \approx 0.882$ (drought-stressed) and $\beta(1) = 1$
(well-hydrated Murray limit). This predicts a continuous variation of
metabolic scaling with hydraulic stress that is testable by measuring $p$
and $\beta$ on the same specimen under contrasting water availability,
with no free parameters.

\subsection{Leaf venation networks ($d=2$)}

Leaf venation networks are planar ($d=2$) and transport water via bulk flow
($n = 4$, Poiseuille).

\noindent\textbf{Physical basis of $m \approx 1$.} The bundle sheath of
dicotyledonous leaf veins consists of a single layer of parenchyma cells
encircling the vascular bundle~\cite{leegood2008}. In both two and three
spatial dimensions, the maintenance cost of this enclosing monolayer scales
as the circumference of the vessel cross-section per unit length, i.e.\
as $r$, giving $m = 1$ by geometric necessity. This geometric argument is
independent of McCulloh et al.~\cite{mcculloh2003}, whose wall-thickness
measurements apply to xylem in stems and petioles rather than to the
bundle sheath of leaf veins. No direct morphometric dataset measuring
bundle sheath thickness versus vein radius is available to our knowledge;
the assignment $m = 1$ rests on the geometric argument above and should
be regarded as a motivated approximation for the minor vein network, where
single-layer sheaths predominate. In major veins (midrib, primary
branches), multi-layer sheaths and sclerenchyma fibres may increase the
effective $m$, but these veins represent a small fraction of total network
length.

\noindent\textbf{Dimensionality caveat (strengthened).} The $d = 2$ assignment is a
topological approximation. Leaves have finite thickness
($\sim$200--400\,\si{\micro\metre} in typical dicots) and a
three-dimensional mesophyll structure, introducing a fractional effective
dimension $d_{\mathrm{eff}} \in (2, 3)$. Fractional-$d$ corrections are
expected to be small but constitute a potential source of systematic error:
a shift from $d = 2$ to $d_{\mathrm{eff}} = \Deff{}$ moves $\beta$ from
\betaLeaf{} to
\begin{equation}
\beta(4, 1, \Deff{}) = \frac{5\Deff{}}{4\Deff{} + 5}
\approx \betaLeafEff{},
\end{equation}
well within the empirical scatter on measured branching exponents
($\alpha_{\exp} \approx 2.0$--$2.5$~\cite{mcculloh2003}).
Measured branching exponents in dicot leaf venation cluster around $\alpha
\approx 2.0$--$2.5$, consistent with the two-dimensional static optimum
\cite{mcculloh2003}. The metabolic prediction $\beta \approx \betaLeaf{}$ is
testable via photosynthetic rate scaling with leaf area across species, and
constitutes a falsifiable prediction of the present framework in a purely
two-dimensional biological system.

\subsection{Sponge aquiferous canals (Porifera)}
\label{sec:sponge}

Sponges are the most ancient extant metazoans and represent a fifth biological
phylum in our validation. Their aquiferous system---a branching network of
canals driving water through choanocyte chambers---operates at extremely low
Reynolds numbers ($Re \ll 1$), confirming fully laminar Poiseuille flow ($n=4$)
\cite{reiswig1975}. The canal walls are composed of choanocyte chambers lined
with thin cellular layers whose maintenance cost scales with surface area ($m
\approx 1$), giving:
\begin{equation*}
\alpha_t = \frac{4+1}{2} = \atSponge{}.
\end{equation*}

Reiswig~\cite{reiswig1975} measured cross-sectional areas throughout the
aquiferous system of three demosponge species and found that successive canal
orders exhibit approximate area preservation ($r_0^2 \approx r_1^2 + r_2^2$),
corresponding to $\alpha_{\exp} \approx 2.0$---well below $\alpha_t = 2.5$.
Unlike pulsatile vascular networks where the wave-impedance attractor $\alpha_w
= 2$ emerges from dynamic wave propagation, in sponges $\mathrm{Wo} \ll 1$
throughout the canal hierarchy: the wave-dominated regime is entirely absent,
and the Womersley constraint that couples $\alpha_w = 2$ to $\beta = 3/4$ never
activates. The observation $\alpha_{\exp} \approx 2$ therefore cannot be
attributed to wave-impedance matching; the responsible mechanism lies outside
the present framework and is treated as an open problem
(Section~\ref{sec:sponge_open}). The coincidence of the geometric value does
not imply equivalence of mechanism: $\alpha = 2$ is a super-attractor of the
allometric spectrum, reachable by distinct physical routes, but only the
pulsatile route produces Kleiber scaling.

\noindent\textbf{Remark (non-hierarchical topology).} Hammel et al.~\cite{hammel2012} showed
via synchrotron microtomography that the leuconoid aquiferous system of \emph{Tethya wilhelma}
departs from strict hierarchical branching, with anastomoses forming a partially reticulate network.
This limits the direct applicability of Lemma 1 (which assumes tree topology) to sponges.
The sponge system is therefore included as a boundary-case validation of the cost-function physics,
not as a demonstration of the full allometric framework.

\begin{remark}[Sponge canals as an open problem]
\label{sec:sponge_open}
The observed $\alpha_{\exp} \approx 2.0$ in sponge aquiferous canals lies
significantly below the transport optimum $\alpha_t = 2.5$ ($n=4$, $m=1$).
Three candidate mechanisms are successively excluded:

\textit{(i) Murray variational optimum.} The cost-minimising exponent for
Poiseuille flow with surface-area maintenance is $\alpha_t = (4+1)/2 = 2.5$,
not 2.0. A Murray-type argument cannot account for the gap.

\textit{(ii) Wave-impedance attractor (Mechanism A).} This requires
$\mathrm{Wo} > 1$ in the proximal canals. Sponge aquiferous systems operate
at $\mathrm{Wo} \ll 1$ throughout~\cite{reiswig1975}; the wave-dominated
regime is structurally absent.

\textit{(iii) Kinematic velocity uniformity.} The condition
$v_k = \mathrm{const}$ across hierarchical levels selects $\alpha = 2$
uniquely in a self-similar symmetric network, but this requires strict
hierarchical tree topology. Leuconoid sponge morphology is non-hierarchical,
featuring anastomoses and asymmetric branching~\cite{hammel2012}; the
applicability conditions are not met, so this route is also excluded.

The physical mechanism selecting $\alpha = 2$ in sponges remains an open
problem. Three directions warrant future investigation: global dissipation
minimisation on non-hierarchical graphs (C1); uniform wall tension as a
passive mechanical constraint via the Laplace law (C2); and evolutionary
selection for filtration rate maximisation (C3). The sponge case is retained
in Table~1 as a boundary validation of the cost-function physics, with the
allometric consequence explicitly set aside.
\end{remark}

\subsection{His-Purkinje conduction system}

Unlike the arterial tree, the His-Purkinje network propagates action potentials
rather than mass. The axial resistance of a continuous core conductor scales
inversely with cross-sectional area~\cite{weidmann1952}, fixing $n=2$ exactly.
Across all mammalian clades, Purkinje cells are distinguished from working
cardiomyocytes by a massive intracellular glycogen reserve sustaining anaerobic
metabolism; this volumetric depot makes the maintenance cost proportional to
cell volume rather than membrane surface, fixing $m=\mHP{}$ independently of
strand topology~\cite{ono2009}.

The resulting static attractor $\alpha_t = \atHP{}$ coincides exactly with the
electrotonic wave-impedance attractor $\alpha_w = 2$ from Rall's cable theory.
This degeneracy eliminates the minimax competition that governs the arterial
tree: no saddle point exists, and the network is intrinsically locked at
$\alpha_w = 2$. By Theorem~\ref{thm:dimensional}, any wave-propagating network in $d = 3$ with
$\alpha = \alpha_w = 2$ yields $\beta = d/(d+1) = \betaHPfrac{}$. Kleiber's law
is therefore reached by two physically distinct routes---dynamic wave-transport
equilibrium in the vasculature, and static-dynamic resonance in the conduction
system---identifying $\beta = \betaHPfrac{}$ as a fixed point of transport
network geometry in three dimensions rather than a property of any single
physical mechanism.

This is kinematically corroborated by $\mathrm{PR} \propto M^{1/4}$ across 33
mammalian species~\cite{noujaim2004}---an indirect temporal signature consistent
with $\beta = \betaHPfrac{}$. The exact resonance further requires
$r^2_{\mathrm{His}} \propto M^{\betaHPfrac{}}$: to our knowledge no
cross-species morphometric dataset of the His bundle exists, and this
constitutes a direct falsifiable prediction testable by standard histological
methods on archival cardiac specimens.

\noindent\textbf{Sharpest falsifiable prediction of this work.} The His--Purkinje
degeneracy $\alpha_t = \alpha_w = 2$ yields a uniquely clean experimental
test: since no minimax gap opens, the framework predicts
$r^2_{\mathrm{His}} \propto M^{\betaHPfrac{}}$ with no adjustable parameters and no
competing attractor. Unlike the vascular predictions, which involve a
minimax balance between wave and viscous penalties, this prediction follows
from a single geometric fixed point shared by two independent physical
mechanisms---static cost optimisation and electrotonic wave matching---that
happen to coincide exactly at $\alpha = 2$ for this system. It is testable
by standard morphometric analysis of archival cardiac histology across
mammalian species and constitutes the most direct empirical test of the
allometric universality class $\beta(2, 2, 3) = \betaHPfrac{}$.

\subsection{River drainage networks}
River networks are shaped by tectonic and erosional processes, not metabolic
optimization. Their branching structure therefore lies outside the domain of the
present theorem and the evolutionary selection mechanism.

\section{Generalized Metabolic Scaling and the Origin of Kleiber's Law}

The classical West--Brown--Enquist (WBE) model derives Kleiber's $3/4$ metabolic
scaling law by assuming that Murray's cubic geometry ($\alpha=3$) ensures a
linear scaling of total blood volume with body mass, $V \propto M^1$. Having
established that biological networks inherently deviate from $\alpha=3$ due to
wall-tissue costs and wave-reflection penalties, the global metabolic scaling
must be re-evaluated.

\noindent\textbf{Network volume as a geometric series.} Consider a symmetric $N$-furcating tree with $K$ hierarchical levels in $d$ spatial dimensions. The radius and length ratios between successive levels are determined by the branching law and the space-filling constraint:
\begin{equation}
\beta_r = N^{-1/\alpha}, \qquad \beta_l = N^{-1/d}
\end{equation}
The total network volume is:
\begin{equation}
V = r_0^2 l_0 \sum_{k=0}^{K} \rho^k, \qquad \rho \equiv N \cdot \beta_r^2 \cdot
\beta_l = N^{1 - 2/\alpha - 1/d}
\end{equation}
The critical exponent $\rho=1$ corresponds to the condition $\alpha =
\alpha_c(d) \equiv 2d/(d-1)$, which equals 3 exactly when $d=3$. For any $\alpha
< \alpha_c(d)$, we have $\rho < 1$: the series converges and is dominated
asymptotically by the $k=0$ term---the aorta and proximal macroscopic vessels.
In this regime:
\begin{equation}
V \propto r_0^2 l_0 \propto N_c^{2/\alpha + 1/d}
\end{equation}
where $N_c = N^K$ is the total number of terminal capillary units.

\noindent\textbf{Derivation of the generalized scaling exponent.} We adopt the two standard WBE assumptions: (\textit{i}) terminal unit invariance, $B \propto N_c$; (\textit{ii}) global volumetric isometry, $V \propto M^1$ (empirically, blood volume is approximately 7\% of body mass across mammals). We adopt $V \propto M^1$ as an independent physiological boundary condition, empirically supported by the near-constant blood volume fraction across mammalian species. This differs epistemologically from the original WBE derivation, where $V \propto M^1$ was obtained as a mathematical \textit{consequence} of Murray's law combined with the space-filling fractal constraint. Elevating it to an empirical postulate is biophysically more accurate---blood volume is tightly regulated by homeostatic mechanisms independent of network geometry---and renders the present derivation independent of Murray's law at the global level.

\noindent\textbf{Remark (Supply geometry vs.\ cellular demand).} The terminal unit invariance assumption ($B \propto N_c$) should not be confused with the claim that cellular metabolic rate is constant across species. It is well established that mitochondrial density and enzymatic activity per unit tissue mass decrease with body size. The present framework models the \emph{supply side}: the geometric constraint imposed by the vascular network on the maximum deliverable resource flux. If cellular demand is lower in large animals, this is a consequence, not a contradiction: the network architecture sets a geometric ceiling on supply, and cellular physiology adapts to operate below that ceiling. The $3/4$ scaling exponent characterizes the ceiling, not the cells.

This empirical postulate admits a geometric grounding. Assuming
biological isopycnia (approximately constant tissue density
$\rho_{\mathrm{tissue}} \approx 10^3\,\mathrm{kg\,m}^{-3}$ across
mammalian species) and a space-filling vascular scaffold that
permeates the body volume uniformly (the standard WBE space-filling
hypothesis), the blood volume fraction $V_{\mathrm{blood}}/V_{\mathrm{body}}$
is a topological constant determined by the network's fractal
dimension and the capillary density per unit tissue volume---both
fixed by diffusion physics and independent of body size. The
constancy of $V/M \approx 7\%$ is therefore a geometric consequence
of three-dimensional Euclidean space-filling, not an evolutionary
coincidence, and its use as an empirical postulate here is
epistemologically conservative rather than circular.

Combining the two postulates with the volume scaling above:
\begin{equation}
M \propto V \propto N_c^{2/\alpha + 1/d}
\end{equation}
Inverting for $N_c$ and recalling $B \propto N_c$, we obtain the exact
generalized metabolic scaling exponent:
\begin{equation}
\boldsymbol{\beta(\alpha, d) = \frac{d\alpha}{2d + \alpha}}
\end{equation}
This purely geometric result depends only on the local branching exponent and
the embedding dimension, with no additional free parameters.

\noindent\textbf{Verification.} For $d=3, \alpha=3$: $\beta=9/9=1$. For $d=3, \alpha=2$: $\beta=6/8=3/4$. The formula is self-consistent and recovers both classical limits.

\vspace{0.4em}
\noindent\textbf{Corollary 1b (Robustness under non-isometric volume scaling).} \textit{If blood volume scales as $V \propto M^{1+\delta}$ with $|\delta| \ll 1$, the corrected metabolic exponent is:}
\begin{equation}
\beta(\alpha, d, \delta) = \frac{d\alpha}{2d+\alpha}\cdot\frac{1}{1+\delta} = \beta(\alpha,d)\,(1 - \delta + O(\delta^2)).
\end{equation}
\textit{For $\delta > 0$ (allometric volume growth, e.g.\ marine mammals), $\beta$ shifts downward; for $\delta < 0$, upward.}

\noindent Note the sign: a blood volume that grows faster than body mass ($\delta > 0$, as in some cetacean clades) shifts $\beta$ \emph{below} 3/4, consistent with the sub-Kleiberian scaling occasionally reported in large whales. Conversely, $\delta < 0$ (volume-economical species) shifts $\beta$ upward. Empirically, blood volume fraction varies by less than $\pm15\%$ across mammalian species~\cite{peters1983}, corresponding to $|\delta| \lesssim 0.06$ and $|\Delta\beta| \lesssim 0.05$ --- smaller than the inter-specific scatter in empirical metabolic data~\cite{glazier2005, white2013}. The isometric postulate is therefore not an assumption that can fail catastrophically: it is a well-constrained empirical boundary condition whose deviations are already captured by this corollary.

\vspace{0.5em}
\noindent \textbf{Theorem 1 (Generalized Metabolic Scaling).}\refstepcounter{manualthm}\label{thm:scaling} \textit{Let a volume-filling fractal network in $d$ dimensions satisfy terminal unit invariance and global volumetric isometry $V \propto M^1$, with local branching exponent $\alpha < 2d/(d-1)$. Then the global metabolic scaling exponent is:}
\begin{equation*}
\beta(\alpha, d) = \frac{d\alpha}{2d + \alpha}
\end{equation*}

\noindent Three corollaries follow immediately.

\vspace{0.5em}
\noindent \textbf{Corollary A (Dynamic origin of Kleiber's law).} \textit{For $\alpha = \alpha_w = 2$ and $d=3$:}
\begin{equation*}
\beta(2, 3) = \frac{6}{8} = \frac{3}{4}
\end{equation*}
In macroscopic mammals, blood volume is overwhelmingly dominated by the aorta
and large pulsatile arteries. As established in the Remark on Dynamic Selection
of $\alpha_w$ above, these proximal vessels are governed by the wave-impedance
matching attractor $\alpha_w=2$. Kleiber's 3/4 law therefore emerges from
dynamic impedance matching in the proximal vasculature, not from viscous
dissipation minimization. This constitutes an inversion of the standard WBE
narrative. The role of zero wave-reflections in metabolic scaling has been noted
previously by Apol, Etienne \& Olff~\cite{apol2008} within a modified WBE
framework; the present derivation differs in establishing $\alpha_w=2$ as an
independent dynamic attractor from reflection minimization (see Remark above),
rather than as a geometric assumption, and in deriving the full
$\beta(\alpha,d)$ relation that maps any branching exponent to its allometric
consequence.

\noindent\textbf{Remark (Sensitivity of $\beta$ to imperfect wave matching).} If coronary impedance matching is imperfect and the effective proximal exponent deviates to $\alpha_w = 2.1$, the predicted metabolic exponent shifts to $\beta(2.1, 3) = \betaAlphaWTwoPointOne{}$, a deviation of $\deltaBetaAlphaWTwoPointOne{}$ from $3/4$. This is well within the inter-specific scatter of $\pm 0.05$ reported in empirical metabolic data~\cite{glazier2005, white2013}. The wave-dominated Kleiber prediction is therefore robust to small deviations from perfect impedance matching: a $5\%$ deviation in $\alpha_w$ produces less than a $4\%$ shift in $\beta$, below the empirical resolution.

\vspace{0.4em}
\noindent\textbf{Remark (Dynamic selection of $\alpha_w$: global reflection minimization).}
The wave-impedance condition $\alpha_w = 2$ is not merely an admissible
solution---it is the \emph{unique global minimizer} of cumulative reflection
losses over the entire network hierarchy. At a symmetric bifurcation with
branching exponent $\alpha$, the impedance ratio between parent and daughter
vessels is $\gamma(\alpha) = N^{1-2/\alpha}$ (derived from the zero-reflection
condition for radius-independent wave speed; see Proposition above), and the power reflection
coefficient at each node is:
\begin{equation}
R^2(\alpha) = \left(\frac{\gamma - 1}{\gamma + 1}\right)^2.
\end{equation}
This vanishes if and only if $\gamma = 1$, i.e.\ $\alpha = 2$ \emph{exactly}.
For any $\alpha \neq 2$, each bifurcation reflects a fraction $R^2 > 0$ of
incident wave power. The cumulative transmitted power fraction over $K$ levels
is $T(K,\alpha) = (1-R^2)^K$. For $K=11$ generations: $T = 1$ at $\alpha_w = 2$;
$T \approx \TcumMinimax{}$ at $\alpha^* = \alphastar{}$ (Kassab 1993 calibration; loss $\approx
\TlossMinimax{}\%$); $T \approx \TcumMurray{}$ at $\alpha = 3$ (loss $\approx
\TlossMurray{}\%$). In a pulsatile system subject to long-term energetic
selection, any architecture with $\alpha \neq \alpha_w$ dissipates additional
power at every bifurcation over millions of cardiac cycles. This cumulative
penalty selects $\alpha_w = 2$ as the dynamically stable attractor. The static
transport optimum $\alpha_t > 2$ is dynamically excluded because it incurs
nonzero reflection losses at every hierarchical level---not merely at a single
node.

\vspace{0.5em}
\begin{proposition}[Wave-impedance attractor for non-dispersive media]
For a symmetric $N$-furcating network in which the wave speed is independent
of vessel radius ($c = \mathrm{const}$), the zero-reflection condition at a
bifurcation reduces to
\begin{equation}
R = 0 \iff \gamma(\alpha) \equiv N^{1-2/\alpha} = 1 \iff \alpha = 2.
\end{equation}
This selects $\alpha_w = 2$ as the unique wave-impedance attractor.
For dispersive media with $c \propto r^{-1/2}$ (Moens--Korteweg arterial
compliance), the zero-reflection condition instead gives $\alpha_w = 5/2$.
The attractor $\alpha_w = 2$ therefore applies specifically to systems
with radius-independent wave speed: cytoplasmic compression waves
($c_s = \sqrt{K^*/\rho}$, bulk modulus independent of $r$) and,
to the extent that inertial wave dynamics dominate viscous corrections,
the large-Womersley limit of arterial flow.
\end{proposition}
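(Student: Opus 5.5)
The plan is to reduce the statement to the standard transmission-line description of a junction and then read off the exponent by elementary algebra.

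\textbf{Step 1: admittance and reflection.} For long-wavelength, inertia-dominated waves in a vessel of radius $r$ and wave speed $c$, the characteristic admittance is
\begin{equation*}
Y = \frac{\pi r^2}{\rho c}.
\end{equation*}
This is the standard linearized result: pressure and flow satisfy $p = Z Q$ with $Z = \rho c/(\pi r^2)$. At a junction, continuity of pressure and conservation of flux give the pressure reflection coefficient
\begin{equation*}
R = \frac{Y_0 - \sum_{i=1}^N Y_i}{Y_0 + \sum_{i=1}^N Y_i}.
\end{equation*}

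\textbf{Step 2: the symmetric case.} For a symmetric $N$-furcation obeying $r_0^\alpha = N r_1^\alpha$, the daughter radius is $r_1 = N^{-1/\alpha} r_0$. Set $\gamma \equiv \sum_i Y_i / Y_0$. Then $R = (1-\gamma)/(1+\gamma)$, so $R = 0$ iff $\gamma = 1$. This coincides with the form used in the preceding Remark, up to the irrelevant inversion $\gamma \leftrightarrow 1/\gamma$.

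\textbf{Step 3: radius-independent wave speed.} If $c$ does not depend on $r$, then
\begin{equation*}
\gamma = N\,(r_1/r_0)^2 = N^{1-2/\alpha}.
\end{equation*}
Because $N \ge 2 > 1$, the map $x \mapsto N^x$ is strictly monotone, so $N^{1-2/\alpha} = 1$ iff $1 - 2/\alpha = 0$, that is, $\alpha = 2$. This proves both existence and uniqueness of the attractor $\alpha_w = 2$ among all $\alpha > 0$.

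\textbf{Step 4: Moens--Korteweg compliance.} Here $c = \sqrt{Eh/(2\rho r)}$, which gives $c \propto r^{-1/2}$ at fixed wall thickness $h$. Then $Y \propto r^{5/2}$, and
\begin{equation*}
\gamma = N\,(r_1/r_0)^{5/2} = N^{1 - 5/(2\alpha)}.
\end{equation*}
The same monotonicity argument gives $\gamma = 1$ iff $\alpha = 5/2$. More generally, if $c \propto r^{-s}$, the same computation yields $\alpha_w = 2 + s$, which contains both cases.

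\textbf{Step 5: physical examples.} The last sentence of the proposition follows by checking that $c$ is independent of $r$ in the two cited settings. For cytoplasmic compression waves, $c_s = \sqrt{K^*/\rho}$ involves only material constants. For arterial flow, the large-Womersley limit makes the flow profile plug-like and inviscid, so Step 1 applies with $c$ treated as radius-independent, to the extent that $Eh/r$ is approximately constant along the proximal tree.

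\textbf{Main obstacle.} The algebra is trivial. The step needing care is justifying the inviscid admittance $Y = \pi r^2/(\rho c)$ and the junction matching conditions. I would derive them from the linearized one-dimensional mass and momentum equations, $\partial_t A + \partial_x Q = 0$ and $\rho\,\partial_t Q + A\,\partial_x p = 0$, with viscous terms dropped. I would state explicitly that this drop is valid only when $\mathrm{Wo} \gg 1$, and that it fails in the regimes identified for sponges and the bronchial tree.
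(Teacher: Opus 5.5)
Your proof is correct and follows the paper's route: impose the matching condition $Z_0 = Z_1/N$ (equivalently $Y_0 = \sum_i Y_i$) with $Z \propto c/r^2$, substitute $r_1 = N^{-1/\alpha} r_0$, and solve, which gives $\alpha = 2$ for constant $c$ and $\alpha = 5/2$ for $c \propto r^{-1/2}$. Your additions go slightly beyond the paper and are sound: the general rule $\alpha_w = 2+s$ for $c \propto r^{-s}$, and the remark that large $\mathrm{Wo}$ justifies the inviscid admittance while radius-independence of arterial $c$ additionally requires $Eh/r$ to be roughly constant.
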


\begin{proof}
At a symmetric $N$-furcation, $R = 0$ requires $Z_0 = Z_1/N$, giving
$c(r_0)/A_0 = c(r_1)/(N A_1)$. With $r_1 = r_0 N^{-1/\alpha}$:
\begin{equation}
R = 0 \iff \frac{c(r_1)}{c(r_0)} = N^{1-2/\alpha}.
\end{equation}
For $c = \mathrm{const}$: left side $= 1$, giving $\alpha = 2$.
For $c \propto r^{-1/2}$: left side $= N^{1/(2\alpha)}$, giving
$1/(2\alpha) = 1 - 2/\alpha$, hence $\alpha = 5/2$. $\square$
\end{proof}

\vspace{0.5em}
\noindent\textbf{Proposition (Minimax Equilibrium Exponent).}
\textit{Let $f(\alpha) = 1-(1-R^2(\alpha))^K$ denote the cumulative wave-reflection loss over $K$ hierarchical levels, and let $g(\alpha)$ denote the structural maintenance excess relative to the transport optimum:}
\begin{equation}
g(\alpha) = \frac{C_{\rm maint}(\alpha)}{C_{\rm maint}(\alpha_t)} - 1, \qquad
C_{\rm maint}(\alpha) = N^{K(m/\alpha + 1/d)}\sum_{j=0}^{K} \mu(\alpha)^j,
\quad \mu(\alpha) = N^{1-m/\alpha-1/d},
\end{equation}
\textit{where the terminal capillary radius $r_K$ is held fixed. Both $f$ and $g$ are parameter-free given $(m, K, N, d, \alpha_t)$. The minimax exponent}
\begin{equation}
\alpha^* = \operatorname*{arg\,min}_{\alpha \in [\alpha_w,\,\alpha_t]}\; \max\bigl(f(\alpha),\, g(\alpha)\bigr)
\end{equation}
\textit{is the unique solution of the equal-cost condition $f(\alpha^*) = g(\alpha^*)$.}

\vspace{0.3em}
\noindent\textbf{Proof.} $f$ is strictly increasing on $[\alpha_w, \alpha_t]$ with $f(\alpha_w)=0$. $g$ is strictly decreasing with $g(\alpha_t)=0$: as $\alpha$ decreases below $\alpha_t$, the proximal vessel radii grow (at fixed $r_K$), increasing the total wall-tissue volume and hence $C_{\rm maint}$. The envelope $\max(f,g)$ is therefore strictly quasi-convex and achieves its unique minimum at the unique crossing point, which exists by the intermediate value theorem since $f(\alpha_w) < g(\alpha_w)$ and $f(\alpha_t) > g(\alpha_t)$. $\square$

\vspace{0.3em}
\noindent\textbf{Numerical evaluation (porcine coronary tree).} With $N=2$, $K=11$, $m = \mCoronary{}$ (Rhodin 1967, $p=0.77$), and $\alpha_t = \atCoronary{}$, the equal-cost condition is solved by bisection (no free parameters), yielding $\alpha^* \approx \alphastarPhys{}$, with $f(\alpha^*) = g(\alpha^*) \approx \fstarPhys{}\%$. This is a parameter-free prediction of the framework. It is consistent with the Kassab (1993) morphometric average $\alpha_{\exp} = 2.70 \pm 0.20$ within one standard deviation~\cite{kassab1993}. Figure~\ref{fig:minimax} shows the two normalized cost curves and the saddle point.

Mathematically, the Womersley transition from the wave-dominated
regime ($\beta = 3/4$) to the viscous asymptote ($\beta \gtrsim 15/17$)
manifests as a smooth crossover rather than a singular phase
transition (Figure~\ref{fig:transition}). However, the system's
behaviour is structurally analogous to a continuous phase transition:
the Womersley number $\mathrm{Wo}$ acts as the control parameter
driving the network between distinct transport regimes, while the
effective metabolic scaling exponent $\beta$ serves as the macroscopic
order parameter distinguishing the universality classes. The exponent
of the crossover function $(\mathrm{Wo}/\mathrm{Wo}_c)^4$---with power $4 =
1/(b_r - b_\omega/2)$ fixed by network topology alone
(Corollary~\ref{cor:transition_exponent})---plays the role of the
critical exponent governing the sharpness of this crossover, and is
independent of any physiological pre-factor.

\begin{figure}[!h]
\centering
\includegraphics[width=0.82\textwidth]{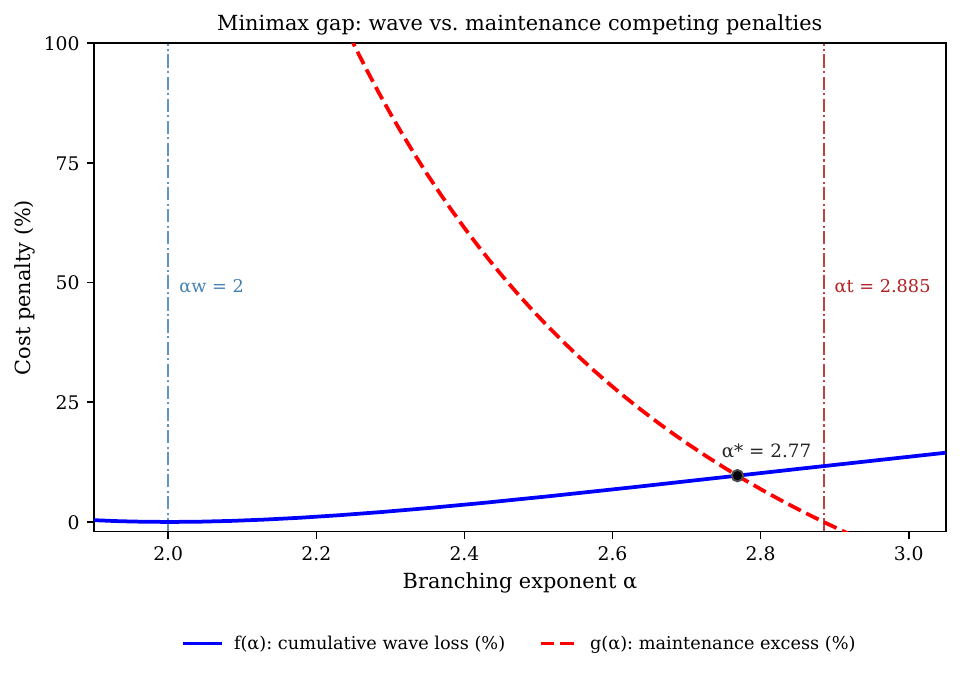}
\caption{Minimax gap for the porcine coronary tree ($K=11$, $N=2$, $m=\mCoronary{}$). Both costs are plotted as physical percentages: $f(\alpha)$ (cumulative wave-reflection loss, blue solid) and $g(\alpha) = C_{\rm maint}(\alpha)/C_{\rm maint}(\alpha_t)-1$ (structural maintenance excess, red dashed). Both functions are derived from first-principles costs at fixed terminal capillary radius with \emph{no free parameters}: they are fully determined by $m$, $K$, $N$, $d$. The unique crossing at $\alpha^* \approx \alphastarPhys{}$, where $f(\alpha^*)=g(\alpha^*)\approx\fstarPhys{}\%$, defines the minimax equilibrium. The result $\alpha^*\approx\alphastarPhys{}$ is a first-principles prediction consistent with the Kassab (1993) morphometric average $\alpha_{\exp}=2.70\pm0.20$ \cite{kassab1993}.}
\label{fig:minimax}
\end{figure}

\vspace{0.4em}
\noindent\textbf{Corollary (Uniqueness of the Kleiber Fixed Point).}
\textit{$\beta = 3/4$ is the unique metabolic scaling exponent consistent with the self-consistency loop of pulsatile vascular networks. Specifically, the conditions (i) $r_0 \propto N_c^{1/\alpha_w} = N_c^{1/2}$, (ii) $N_c \propto M^\beta$, and (iii) $\mathrm{Wo}(M) \propto M^{1/4}$ are simultaneously satisfied if and only if $\beta = 3/4$.}

\noindent\textbf{Proof.} From (i) and (ii): $r_0 \propto M^{\beta/2}$. The Womersley number scales as $\mathrm{Wo} \propto r_0 \cdot \omega^{1/2} \propto M^{\beta/2} \cdot M^{-1/8} = M^{\beta/2 - 1/8}$. Condition (iii) requires $\beta/2 - 1/8 = 1/4$, giving $\beta = 3/4$ uniquely. Any other value of $\beta$ produces $\mathrm{Wo} \propto M^{\beta/2-1/8} \neq M^{1/4}$, violating the pulsatile regime condition and hence self-consistency. $\square$

\vspace{0.3em}
\noindent The self-consistency loop ($\alpha_w = 2 \Rightarrow \beta = 3/4 \Rightarrow r_0 \propto M^{3/8} \Rightarrow \mathrm{Wo} \propto M^{1/4} \Rightarrow \alpha_{\rm eff} = \alpha_w = 2$) is therefore not circular but a \emph{fixed-point equation with a unique solution}. Given $\alpha_w = 2$ as the physical input---established independently by global reflection minimization (see Remark on Dynamic Selection of $\alpha_w$)---the entire chain follows without further assumptions.

\vspace{0.5em}
\begin{corollary}[Isometric viscous limit --- Corollary B]
For $\alpha < \alpha_c(d) = 2d/(d-1)$, the metabolic scaling exponent is
$\beta(\alpha, d) = d\alpha/(2d+\alpha)$. At the critical value $\alpha = \alpha_c(d)$,
the geometric series for network volume is marginal ($\rho = 1$), and the volume
scales as $V \propto r_0^2 l_0 (K+1)$. Since $K \sim \log_{N} N_c$, this gives
$V \propto r_0^2 l_0 \log N_c$, and consequently
\begin{equation}
B \propto \frac{M}{\log M},
\end{equation}
rather than a pure power law. For $d = 3$, $\alpha_c = 3$ is precisely Murray's
cubic law: a purely viscous network at the Murray optimum does not exhibit
power-law metabolic scaling, but logarithmically corrected isometry.
This constitutes the internal inconsistency at the core of the WBE derivation,
which assumes $\alpha = 3$ while deriving $\beta = 3/4$ from the same geometric
series.
\end{corollary}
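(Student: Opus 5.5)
The plan is to treat the subcritical and critical cases separately, reusing the geometric-series representation of network volume from the derivation of Theorem~\ref{thm:scaling}:
\[
V = r_0^2 l_0 \sum_{k=0}^{K}\rho^k, \qquad \rho = N^{1-2/\alpha-1/d}.
\]

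For $\alpha<\alpha_c(d)=2d/(d-1)$, I will first check that $1-2/\alpha-1/d<0$. This is equivalent to $\alpha(d-1)<2d$, so $\rho<1$. The series is then bounded between $1$ and $1/(1-\rho)$, both independent of $K$, and $V\propto r_0^2 l_0$ up to a size-independent constant. With $r_0\propto N_c^{1/\alpha}$ (inverting $r_K=r_0N^{-K/\alpha}$ at fixed terminal radius) and $l_0\propto N_c^{1/d}$ (likewise from $\beta_l$), we get $V\propto N_c^{2/\alpha+1/d}$. Imposing $V\propto M$ and $B\propto N_c$ and inverting gives $\beta=1/(2/\alpha+1/d)=d\alpha/(2d+\alpha)$. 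This is exactly Theorem~\ref{thm:scaling}, so that part is a citation.

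For the critical case $\alpha=\alpha_c$, the exponent vanishes, so $\rho=1$ exactly and the sum equals $K+1$. Since $N_c=N^K$, we have $K=\log_N N_c$, hence $V\propto r_0^2 l_0(1+\log_N N_c)$. At $\alpha=\alpha_c$ the prefactor exponent is $2/\alpha_c+1/d=(d-1)/d+1/d=1$, so $r_0^2l_0\propto N_c$. Therefore $M\propto V\propto N_c\log N_c$, up to the additive $+1$, which is subleading.

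The main step is inverting $M\propto N_c\log N_c$. I would take logarithms, giving $\log M=\log N_c+\log\log N_c+O(1)$, so $\log N_c\sim\log M$. Substituting back yields $N_c\propto M/\log N_c\sim M/\log M$, with relative corrections of order $\log\log M/\log M$. With $B\propto N_c$ this gives $B\propto M/\log M$ asymptotically in the large-$M$ sense; the statement's ``$\propto$'' should be read that way, and I would say so explicitly. Finally, I would specialize to $d=3$, where $\alpha_c=6/2=3$ is Murray's exponent. The closing sentence about WBE is interpretive rather than a mathematical claim, and I would note that it follows from the computation: the WBE geometric series at $\alpha=3$ is marginal and produces no $3/4$ power law.
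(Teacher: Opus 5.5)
Your proposal is correct and follows essentially the same route as the paper, which justifies Corollary~B inline: at $\alpha=\alpha_c$ the volume series collapses to $K+1=1+\log_N N_c$, and combining $M\propto V$ with $B\propto N_c$ gives $B\propto M/\log M$. You also make explicit two steps the paper leaves implicit. First, the prefactor exponent $2/\alpha_c+1/d$ equals $1$, so $r_0^2 l_0\propto N_c$ and $M\propto N_c\log N_c$. Second, the inversion holds only asymptotically, with relative corrections of order $\log\log M/\log M$.
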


\noindent\textbf{Remark (apparent discrepancy with plant allometry).} Enquist et al.~\cite{enquist1998} (1998) and subsequent WBE-derived analyses report $\beta \approx 3/4$ for whole-plant metabolic scaling, apparently contradicting the logarithmically corrected isometry ($B \propto M/\log M$) predicted by this Corollary. Two resolutions are physically consistent with the present framework. First, the empirical $m$ for woody plants is not a fixed constant: under the mechanically heterogeneous conditions of mature trees, wall-thickness scaling departs systematically from the idealized $h \propto r$ relation ($p=1$, $m=2$), with observed $p$ spanning $[0, 1]$ across species and tissue types (McCulloh 2003~\cite{mcculloh2003}). For $p < 1$ (drought-stressed or structurally economical xylem), $\alpha_t < 3$, the marginal case does not apply, and $\beta$ becomes a strict power law below 1. Second, the whole-plant metabolic rate integrates leaf, stem, and root contributions, each with distinct $(n, m)$ parameters; the observed $\approx 3/4$ exponent may reflect a compound scaling not captured by the idealized xylem-only model. The $B \propto M/\log M$ prediction of Corollary B therefore applies strictly to the viscous xylem network in isolation under well-hydrated, $m = 2$ conditions---a narrower claim than whole-organism plant allometry.

\vspace{0.5em}
\noindent \textbf{Theorem 2 (Dimensional Universality of Kleiber's Law).}\refstepcounter{manualthm}\label{thm:dimensional} \textit{For any pulsatile volume-filling network in $d$ spatial dimensions governed by the wave-impedance attractor $\alpha_w = 2$, the metabolic scaling exponent is:}
\begin{equation}
\beta(\alpha_w = 2,\, d) = \frac{d}{d+1}.
\end{equation}
\textit{This result is exact and parameter-free. It depends only on the spatial dimension of the organism.} $\square$

\noindent\textbf{Proof.} \textit{Substituting $\alpha = 2$ into $\beta(\alpha, d) = d\alpha/(2d+\alpha)$ gives $\beta = 2d/(2d+2) = d/(d+1)$. $\square$}

\begin{corollary}[Planar viscous asymptote: leaf venation]
For a two-dimensional leaf venation network ($d=2$) governed by
Poiseuille flow ($n=4$) and surface-area cell-wall maintenance
($m=1$), the static transport optimum is $\alpha_t = 5/2$. The
corresponding metabolic scaling exponent is strictly sub-isometric:
\[
  \beta(4,1,2) = \frac{2 \times 5}{8+5} = \frac{10}{13} \approx 0.769.
\]
This distinguishes planar leaf networks from three-dimensional
well-hydrated xylem, which approaches the isometric Murray limit
$\beta(4,2,3) = 1$. Dimensionality, not transport physics, governs
the metabolic floor in planar systems. This constitutes a falsifiable
prediction testable against interspecific scaling of photosynthetic
rates with leaf area~\cite{mcculloh2003, brodribb2007}.
\end{corollary}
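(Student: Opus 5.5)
The corollary follows by composing two earlier results: Theorem~0 fixes the local exponent, and Theorem~1 maps it to the global exponent. I will write $\beta(n,m,d)$ for $\beta(\alpha_t(n,m),d)$, as in Figure~\ref{fig:spectrum}.

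\emph{Local exponent.} Theorem~0 applies because the venation cost has the required two-term form. The transport term is Poiseuille, $A(Q)\propto Q^2 r^{-4}$, so $n=4$. The maintenance term is the bundle-sheath monolayer, whose cost per unit length is proportional to the circumference, so $m=1$; this was established in the leaf-venation subsection. Parts (i)--(ii) of Theorem~0 then give $\alpha_t=(4+1)/2=5/2$. Part (iii) makes this the unique branching rule, so the global step has a well-defined input.

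\emph{Hypothesis of Theorem~1.} Theorem~1 requires $\alpha<\alpha_c(d)=2d/(d-1)$. For $d=2$ we have $\alpha_c(2)=4$, and $5/2<4$. Hence $\rho=N^{1-2/\alpha-1/d}=N^{1-4/5-1/2}=N^{-3/10}<1$. The volume series converges and is dominated by the proximal term, so $V\propto N_c^{2/\alpha+1/d}$. This check is the only substantive step. The point to make explicit is that in $d=2$ the critical exponent moves from $3$ up to $4$, so even Murray-like values stay in the convergent regime. That is why dimensionality, not transport physics, sets the floor. Terminal-unit invariance and $V\propto M$ are inherited assumptions, so Theorem~1 applies verbatim with $d=2$.

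\emph{Evaluation and comparison.} Substituting into Theorem~1 gives
\[
\beta=\frac{d\alpha}{2d+\alpha}=\frac{2\cdot 5/2}{4+5/2}=\frac{5}{13/2}=\frac{10}{13}.
\]
In the $(n,m,d)$ form this is $d(n+m)/(4d+n+m)=10/13\approx 0.769<1$, which is the strict sub-isometry claimed.

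For the contrast with three-dimensional xylem, take $n=4$, $m=2$, so $\alpha_t=3=\alpha_c(3)$. The formula gives $\beta(4,2,3)=9/9=1$, the Murray limit. I will note that this case is exactly marginal: Corollary~B says the true behaviour there is the logarithmically corrected isometry $B\propto M/\log M$, so ``approaches'' is the appropriate wording.

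I expect no genuine obstacle. The only care needed is to state that the $d=2$ assignment is idealised; the $d_{\mathrm{eff}}$ caveat from the leaf subsection applies.
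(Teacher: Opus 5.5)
Your proposal is correct and follows the paper's route, which is a direct substitution of $\alpha_t=(4+1)/2=5/2$ into $\beta(\alpha,d)=d\alpha/(2d+\alpha)$ with $d=2$ (equivalently, Theorem~6 evaluated at $(n,m,d)=(4,1,2)$), followed by the comparison with the marginal case $\beta(4,2,3)=1$ of Corollary~B. Your explicit check that $\alpha_t=5/2<\alpha_c(2)=4$ (so $\rho=N^{-3/10}<1$) is a worthwhile addition the paper leaves implicit, and it is also the cleanest justification of strict sub-isometry, since $\beta(\alpha,d)<1$ is equivalent to $\alpha<\alpha_c(d)$.
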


\vspace{0.4em}
\noindent\textbf{Corollary (Dimensional spectrum of metabolic scaling).} \textit{The wave-dominated metabolic exponent takes exact rational values determined solely by $d$:}

\begin{center}
\small
\begin{tabular}{@{}ccl@{}}
\toprule
$d$ & $\beta = d/(d+1)$ & Biological context \\
\midrule
1 & $1/2 = \betaDimOne{}$ & Linear/tubular organisms, 1D transport \\
2 & $2/3 = \betaDimTwo{}$ & Planar organisms with pulsatile circulation \\
3 & $3/4 = \betaDimThree{}$ & Kleiber's law --- three-dimensional mammals \\
4 & $4/5 = \betaDimFour{}$ & Hypothetical 4D embedding (fractal geometry) \\
\bottomrule
\end{tabular}
\end{center}

\noindent Kleiber's $3/4$ law is therefore not a biological accident but the \emph{inevitable consequence} of pulsatile wave physics operating in three spatial dimensions. The exponent $3/4$ is the unique value $d/(d+1)$ for $d=3$. This theorem generates a falsifiable cross-dimensional prediction: any organism with a pulsatile cardiovascular system whose body geometry is effectively two-dimensional (e.g., flat organisms such as Pleuronectiformes, or thin colonial organisms) should exhibit $\beta = 2/3$ rather than $3/4$.

\noindent\textbf{Falsifiable prediction (avian embryo and flatworm).} Two concrete test systems follow directly. First, the early-stage chick embryo (\textit{Gallus gallus domesticus}, Hamburger-Hamilton stages 10--18) possesses a beating heart and a planar vascular network confined to the area vasculosa, a nearly two-dimensional extraembryonic membrane. At these stages the network has not yet acquired full three-dimensional geometry. The framework predicts $\beta \approx 2/3$ for oxygen consumption scaling with embryo area during this window, transitioning toward $\beta = 3/4$ as the vasculature becomes three-dimensional. Second, free-living flatworms (Platyhelminthes, e.g., \textit{Dugesia}) lack a dedicated circulatory system but possess parenchymal transport that scales with body surface; their metabolic scaling is predicted to lie near $\beta = 2/3$ precisely because their effective transport dimensionality is $d \approx 2$. Both predictions are testable with existing experimental techniques (micro-respirometry, live imaging) and would constitute strong cross-dimensional tests of Theorem~\ref{thm:dimensional} independent of any vascular morphometry.

\vspace{0.5em}
\noindent \textbf{Corollary C (Stratification and allometric deviations).} \textit{Real cardiovascular networks are hierarchically stratified. While the proximal volume is wave-dominated ($\alpha \to \alpha_w = 2$, preserving the 3/4 global scaling), the morphometric average across the full hierarchy reflects the minimax equilibrium $\alpha^* \approx \alphastarPhys{}$ derived from the first-principles equal-cost condition of the Proposition above, consistent with the Kassab (1993) morphometric average $\alpha_{\exp} = 2.70 \pm 0.20$ \cite{kassab1993}.} The minimax equilibrium arises from balancing two competing penalties: wave-impedance matching penalizes architectures with $\alpha > \alpha_w = 2$ (increasing reflection losses at bifurcations), while structural maintenance penalizes architectures with $\alpha < \alpha_t$ (increasing wall-tissue cost at fixed capillary radius).

\noindent\textbf{Empirical remark on proximal stratification.} The Kassab (1993) dataset reports a morphometric average of $\alpha_{\exp} \approx 2.7$ computed across all hierarchical levels of the porcine coronary tree. The present framework predicts a level-dependent stratification: $\alpha_{\rm eff}(k) \to \alpha_w = 2$ in the most proximal macroscopic vessels (aorta and primary branches), transitioning toward $\alpha^* \approx \alphastarPhys{}$ in the mid-hierarchy and $\alpha_t \in [2.8, 2.9]$ in the distal arterioles. This level-resolved prediction is a direct, falsifiable consequence of the framework. Existing published morphometric datasets sufficient to test this prediction---resolving $\alpha$ generation-by-generation in large mammalian arteries---are structurally sparse in the literature; the stratification of $\alpha \to 2$ in the proximal layers therefore currently stands as a theoretical prediction of this model rather than an empirically established input. In organisms where the wave attractor is absent or developmentally constrained---including small mammals below the Womersley transition and certain marine invertebrates---the effective volumetric exponent shifts toward the static transport optimum. The theoretical prediction
\begin{equation}
\beta(\alpha^*, 3) = \frac{3 \times \alphastarPhys}{6 + \alphastarPhys} \approx \betaMinimaxPhys{}
\end{equation}
aligns with the super-3/4 metabolic scaling (0.87--0.95) empirically reported in
these ecological cohorts~\cite{glazier2005, white2013}, providing a mechanistic
resolution of a long-standing allometric anomaly.

\vspace{0.8em}
\noindent \textbf{Lemma 1 (Asymptotic Proximal Dominance).} \textit{Let a symmetric $N$-furcating fractal network in $d$ spatial dimensions have a level-dependent branching exponent $\alpha(k)$ for $k=0,1,\ldots,K-1$. Define the local volume ratio at level $k$:}
\begin{equation}
\rho(k) \equiv N^{1-2/\alpha(k)-1/d}.
\end{equation}
\textit{Suppose $\alpha(k) < \alpha_c(d) \equiv 2d/(d-1)$ for all $k$, so that $\rho(k) \in (0,1)$ for all $k$. Then:}

\textit{(i) (Convergence) The total network volume}
\begin{equation}
V = r_0^2 l_0 \sum_{k=0}^{K} \prod_{j=0}^{k-1} \rho(j)
\end{equation}
\textit{satisfies $V = r_0^2 l_0 \cdot C$ where the constant $C \in \bigl[1,\, (1-\rho_{\max})^{-1}\bigr]$ is finite and independent of the number of terminal units $N_c = N^K$, with $\rho_{\max} \equiv \sup_k \rho(k) < 1$.}

\textit{(ii) (Proximal scaling) It follows that $V \propto r_0^2 l_0$ as $K \to \infty$, and hence $M \propto V \propto r_0^2 l_0$. The allometric scaling exponent $\beta$ is therefore determined entirely by the scaling of the proximal vessel geometry $(r_0, l_0)$ with body mass $M$, independently of the branching geometry at distal levels.}

\noindent\textbf{Proof.} \textit{Let $P_k \equiv \prod_{j=0}^{k-1}\rho(j)$, with $P_0 = 1$. Since $\rho(j) \leq \rho_{\max} < 1$ for all $j$, we have $P_k \leq \rho_{\max}^k$. Therefore}
\begin{equation*}
C = \sum_{k=0}^{\infty} P_k \leq \sum_{k=0}^{\infty} \rho_{\max}^k =
\frac{1}{1-\rho_{\max}} < \infty.
\end{equation*}
\textit{Since $P_0 = 1$, we also have $C \geq 1$. Thus $C$ is a positive finite constant independent of $K$, and $V = r_0^2 l_0 \cdot C \propto r_0^2 l_0$. $\square$}

\vspace{0.4em}
\noindent\textbf{Corollary (Independence of branching number $N$).} \textit{The metabolic scaling exponent $\beta(\alpha, d) = d\alpha/(2d+\alpha)$ is independent of the branching number $N$ (bifurcations, trifurcations, etc.). This follows because $\rho = N^{1-2/\alpha-1/d}$ and the volume scaling $V \propto r_0^2 l_0 \propto N_c^{2/\alpha+1/d}$ contain $N$ only through $N_c = N^K$, which enters only as the count of terminal units. The exponent $\beta$ therefore depends solely on the geometric parameters $(\alpha, d)$.}

\vspace{0.4em}
\noindent\textbf{Theorem 3 (WBE Geometric Inconsistency).}\refstepcounter{manualthm}\label{thm:wbe}
\textit{Murray's law $\alpha = 3$ corresponds precisely to the convergence boundary $\alpha_c(3) = 3$, where $\rho = 1$ and the geometric series $\sum_k \rho^k$ diverges. Consequently, the proximal-dominance argument underlying the WBE derivation of Kleiber's law fails at the very branching exponent WBE assumes: the volume integral is not dominated by the aorta but receives equal contributions from every hierarchical level. The WBE result $\beta = 3/4$ is therefore structurally divergent under its own geometric assumptions.}

\noindent\textbf{Proof.} From Lemma~1, the series $\sum_k \rho^k$ converges iff $\rho < 1$, i.e.\ $\alpha < \alpha_c(d)$. For $d=3$, $\alpha_c(3) = 6/2 = 3$. WBE assumes $\alpha = 3$, giving $\rho = 2^{1-2/3-1/3} = 2^0 = 1$ exactly. The geometric series $\sum_{k=0}^{K} 1^k = K+1 \to \infty$ as $K \to \infty$, so no finite hierarchical level captures a dominant fraction of the total volume. The proximal-dominance approximation $V \approx r_0^2 l_0$ therefore does not hold, and the WBE derivation of $\beta = 3/4$ is geometrically inconsistent. $\square$

\noindent\textbf{Remark (Finite-$K$ robustness and developmental stability).} A defender of WBE might object that in a real organism $K \approx 15$--$20$ is finite, so the series $\sum_{k=0}^{K} 1^k = K+1$ does not literally diverge but simply grows linearly in $K$. This is correct. However, the objection misses the deeper issue: at $\alpha = 3$ ($\rho = 1$), the system sits precisely at a \emph{critical point} of the geometric series. Any developmental perturbation $\delta r$ in the radius of a proximal vessel propagates \emph{unchanged} through all $K$ downstream levels, since each level contributes equally to the total volume. The total blood volume therefore has a relative uncertainty of order $K \cdot (\delta r / r_0)$, which grows linearly with the number of generations. By contrast, at $\alpha = 2$ ($\rho \approx 0.79$), perturbations are attenuated by a factor $\rho < 1$ at each level: a developmental error at generation $k$ contributes only $\rho^k \cdot (\delta r / r_0)$ to the total volume, and the cumulative uncertainty is bounded by $(\delta r / r_0) / (1-\rho) \approx 4.8\,(\delta r / r_0)$, independent of $K$. Natural selection therefore does not choose $\alpha < 3$ because $\alpha = 3$ is geometrically impossible in a finite organism---it chooses $\alpha < 3$ because architectures with $\rho < 1$ are \emph{developmentally robust}: errors in distal vessels are geometrically suppressed rather than fully transmitted to the proximal constraint.

\begin{corollary}[Developmental Robustness of the Wave Attractor]
\label{cor:noise_robust}
For a self-similar tree of depth $K$ with volume-convergence ratio
$\rho(\alpha) = N^{1-2/\alpha-1/d}$, the total network volume
$V \propto \sum_{g=0}^{K}\rho(\alpha)^g$ is asymptotically finite
if and only if $\rho(\alpha) < 1$, which requires
$\alpha < \alpha_c(d) \equiv 2d/(d-1)$.

In three dimensions, $\alpha_c = 3$ exactly: the WBE exponent
$\alpha_t = 3$ is the unique marginal case where $\rho = 1$ and
$V$ grows without bound as $V \propto K$. A network at $\alpha = 3$
is therefore maximally sensitive to developmental noise: a
perturbation $\delta r$ in any proximal vessel propagates without
geometric attenuation through the full hierarchy, and the fractional
volume error $\delta V/V \propto 3\,\delta r/r_0$ does not diminish
with tree depth.

By contrast, the wave-impedance attractor $\alpha_w = 2$ yields
$\rho(\alpha_w) \approx 0.794 < 1$, so the geometric series converges
to a finite limit $V_\infty = r_0^2 l_0/(1-\rho) < \infty$. A
perturbation $\delta r_g$ at generation $g$ contributes a fraction
$\rho^{K-g} \to 0$ to the terminal network volume, providing strict
exponential attenuation of developmental errors. The strict ordering
$\rho(\alpha_w) < \rho(\alpha^*_{\mathrm{vasc}}) < \rho(\alpha_t) = 1$
(established in the Monotonic Ordering Theorem) therefore implies a
strict ordering of developmental robustness: the wave attractor is
the unique biologically stable fixed point under developmental noise
in a $d=3$ volume-filling tree.
\end{corollary}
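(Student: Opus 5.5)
The plan is to reduce every clause of the Corollary to elementary facts about the geometric series $\sum_{g=0}^{K}\rho^g$ with $\rho(\alpha)=N^{1-2/\alpha-1/d}$. Throughout I would take the volume representation $V = r_0^2 l_0\sum_{g}\rho^g$ from Section~5 and Lemma~1 as given.

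\textbf{Step 1 (finiteness criterion).} I would first show that $\rho<1$ is equivalent to $\alpha<\alpha_c(d)$. Since $N\ge 2$, the condition $\rho<1$ holds iff $1-2/\alpha-1/d<0$, i.e.\ $\alpha<2d/(d-1)$. The function $\alpha\mapsto 1-2/\alpha-1/d$ is strictly increasing on $\alpha>0$, so $\rho(\alpha)$ is strictly increasing in $\alpha$ and equals $1$ exactly at $\alpha_c$. With this in hand:
\begin{itemize}
\item for $\rho<1$, Lemma~1(i) bounds $\sum_{g\le K}\rho^g$ uniformly by $(1-\rho)^{-1}$;
\item for $\rho\ge 1$, every term is at least $1$, so the partial sums are at least $K+1$ and diverge as $K\to\infty$.
\end{itemize}
This gives the ``if and only if''.

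\textbf{Step 2 (the two special cases).} For $d=3$ we have $\alpha_c=6/2=3$, and $\rho(3)=N^{0}=1$, so $V=r_0^2l_0(K+1)\propto K$. This is the same computation as in the proof of Theorem~\ref{thm:wbe}. For $\alpha_w=2$ and $N=2$, $d=3$, we get $\rho=2^{-1/3}\approx 0.794$, and summing the series gives $V_\infty=r_0^2l_0/(1-\rho)$.

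\textbf{Step 3 (sensitivity).} Here I would linearize $V$ in a perturbation of a single level. Perturbing the radius at generation $g$ by a relative amount $\delta r_g/r_g$ changes only the $g$-th term, and does so by a relative factor of order $2\,\delta r_g/r_g$ (or $3$ if the length is coupled to the radius, which is the convention producing the stated factor $3$). I will state this convention explicitly. The contribution of that term to $V$ carries the weight $\rho^g/\sum_j\rho^j$:
\begin{itemize}
\item At $\rho=1$ every level carries weight $1/(K+1)$. A coherent proximal perturbation that rescales the self-similar hierarchy therefore changes every term by the same relative amount, so $\delta V/V$ stays of order $\delta r/r_0$ and does not decay with $K$.
\item At $\rho<1$ the weight of a level at distance $s$ from the reference end decays like $\rho^{s}$. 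Indexing from the terminal end, so that $s=K-g$, gives the stated $\rho^{K-g}$. The summed response is bounded by $(\delta r/r_0)/(1-\rho)$, uniformly in $K$.
\end{itemize}

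\textbf{Step 4 (ordering).} The strict chain $\rho(\alpha_w)<\rho(\alpha^*_{\mathrm{vasc}})<\rho(\alpha_t)=1$ follows from the monotonicity in Step~1, combined with $2<\alpha^*<3$ from the Minimax Proposition. Note that $\rho(\alpha_t)=1$ requires reading $\alpha_t=3$, the WBE/Murray value in the statement.

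The main obstacle is Step~3. The phrases ``propagates without attenuation'' and ``contributes a fraction $\rho^{K-g}$'' depend on which quantity is held fixed ($r_0$ or $r_K$) and on the direction in which generations are indexed. I would first fix the convention that $r_K$ is held fixed, as in the Minimax Proposition, and check that the stated exponent $K-g$ and the prefactor $3$ come out. If they do not, I would restate those constants as order-of-magnitude claims, keeping only the $K$-independence versus $K$-growth dichotomy, which is the substantive content.
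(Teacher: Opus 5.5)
Your Steps~1, 2 and~4 are sound and follow what the paper relies on. The paper gives no separate proof: the finiteness criterion is Lemma~1 together with the WBE-inconsistency theorem, the ordering is the Monotonic Ordering Theorem, and the sensitivity clause rests on the preceding ``Finite-$K$ robustness'' remark. You are also right that $\rho^{K-g}$ only makes sense with $g$ counted from the terminal end, and that $\rho(\alpha_t)=1$ forces the reading $\alpha_t=3$.

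The gap is in Step~3. Your two bullets measure different quantities, so no comparison between $\alpha=3$ and $\alpha=2$ comes out of it. Write $w_g=\rho^g/S_K$ with $S_K=\sum_{j=0}^{K}\rho^j$. A coherent rescaling of every level by a relative amount $\varepsilon$ gives $\delta V/V=2\varepsilon\sum_g w_g=2\varepsilon$ (or $3\varepsilon$ with $l\propto r$) for \emph{every} $\rho$. Your first bullet therefore cannot single out $\alpha=3$ as maximally sensitive. Your second bullet's bound $(\delta r/r_0)/(1-\rho)$ is $\sum_g\rho^g\,\delta r/r_0$, i.e.\ $\delta V$ in units of $r_0^2 l_0$, not $\delta V/V$. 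In those units the $\rho=1$ counterpart is $(K+1)\,\delta r/r_0$, not ``of order $\delta r/r_0$''.

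Worse, in your own single-level setup the relative error is $2\varepsilon w_g$. For a proximal perturbation this is $2\varepsilon/(K+1)\to 0$ at $\rho=1$, but it tends to $2\varepsilon(1-\rho)\approx 0.41\,\varepsilon$ at $\rho\approx 0.794$. So the claimed ordering reverses. It also reverses for independent level-wise noise: $\sum_g w_g^2=1/(K+1)$ at $\rho=1$, whereas $\sum_g w_g^2\to(1-\rho)/(1+\rho)$ for $\rho<1$. Your fallback of keeping ``only the $K$-independence versus $K$-growth dichotomy'' does not rescue this, because neither bullet as written exhibits any growth in $K$.

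To close the step you have to commit to a single normalization. The only one that yields the stated dichotomy is the one implicit in the paper's remark. For same-sign level-wise errors of size $\varepsilon$, $|\delta V|/(r_0^2 l_0)\le 2\varepsilon S_K$. This equals $2\varepsilon(K+1)$ at $\rho=1$, and is at most $2\varepsilon/(1-\rho)$, independent of $K$, when $\rho<1$. Be aware that this is nothing more than the boundedness of $S_K$ from your Step~1, restated.

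With that normalization, the remaining clauses read as follows:
\begin{itemize}
\item The ``strict ordering of robustness'' follows because $S_K$ is strictly increasing in $\rho$.
\item ``Unique'' can only mean unique on the admissible interval $[\alpha_w,\alpha_t]$, since $\rho(\alpha)<\rho(2)$ for $\alpha<2$.
\item The clause $\delta V/V\propto 3\,\delta r/r_0$ holds for all $\rho$ under rigid rescaling. Present it that way, as carrying no comparative content, rather than as evidence that $\alpha=3$ is special.
\end{itemize}
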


\vspace{0.4em}
\noindent\textbf{Remark (Finite-generation convergence).}
For a symmetric binary tree ($N=2$) in $d=3$, the fraction of the asymptotic
volume \emph{not} captured by the first $K+1$ generations is:
\begin{equation}
\epsilon(K) = \rho^{K+1}, \qquad \rho = 2^{1-2/\alpha-1/3}.
\end{equation}
Real biological networks have $K \approx 11$--$15$ generations. The table below
quantifies convergence for the key exponents of this paper.

\begin{center}
\small
\begin{tabular}{c p{3.8cm} c c c}
\toprule
$\alpha$ & Physical regime & $\rho$ & $\epsilon(K{=}11)$ & Volume captured \\
\midrule
2.00 & Wave attractor $\alpha_w$      & \rhoAlphaWave{} & \epsAlphaWave{} &
\volAlphaWave{}\% \\
\alphastarNeural{} & Neural minimax $\alpha^*$      & \rhoAlphaNeuralMinimax{} &
\epsAlphaNeuralMinimax{} & \volAlphaNeuralMinimax{}\% \\
\alphastar{}$^\ddagger$ & Vascular minimax $\alpha^*$    & \rhoAlphaVascMinimax{} &
\epsAlphaVascMinimax{} & \volAlphaVascMinimax{}\% \\
3.00 & Murray / WBE                   & 1.000 & $\infty$ & 0\%   \\
\bottomrule
\end{tabular}
\noindent\footnotesize{$^\ddagger$Evaluated at the empirically calibrated value $\alpha^* = \alphastar{}$ (Kassab 1993~\cite{kassab1993}). The first-principles equal-cost condition predicts $\alpha^* = \alphastarPhys{}$ (see Corollary~C); the two values are consistent within morphometric scatter $\alpha_{\exp} = 2.70 \pm 0.20$.}\normalsize
\end{center}
\noindent\textbf{Finite-size scaling and intraspecific scatter.}
The convergence table above shows that at the vascular minimax equilibrium
$\alpha^* \approx \alphastarPhys{}$, a fraction $\epsilon \approx
\epsAlphaVascMinimax{}$ of network volume is not captured by the proximal
approximation at $K = 11$ generations. Since $K$ scales logarithmically
with body mass, smaller organisms possess fewer hierarchical levels, further
increasing $\epsilon$. The observed inter-specific scatter in metabolic data
(typically $\pm 0.05$ in $\beta$) is therefore a predictable finite-$K$
correction rather than stochastic biological noise. The Kleiber limit
$\beta = 3/4$ is recovered exactly only in the large-organism limit where
$K$ is sufficient to suppress these geometric residuals, consistent with the
empirical observation that deviations from Kleiber scaling are most
pronounced in small mammals~\cite{glazier2005, white2013}.

\vspace{0.4em}
\noindent\textbf{Remark.} The Lemma makes no assumption on the uniformity of $\alpha(k)$: the branching geometry may vary arbitrarily across hierarchical levels---as it does in real biological networks---without affecting the conclusion. In stratified networks, the global morphometric average $\langle \alpha \rangle$ sampled across all levels differs from the proximal effective exponent $\alpha(0)$ governing the scaling. Branching-exponent studies that report a single $\alpha$ for the whole tree must therefore be interpreted with care when applied to the present theorem.

\vspace{0.8em}
\noindent \textbf{Theorem 4 (Universal Allometric Bounds).}\refstepcounter{manualthm}\label{thm:bounds} \textit{Let a biological transport network in $d=3$ spatial dimensions satisfy the hypotheses of Lemma 1, terminal unit invariance ($B \propto N_c$), and volumetric isometry ($V \propto M$). Suppose further that the transport physics is linear ($n=4$, Poiseuille) and that the structural maintenance exponent satisfies $m \in [1, 2]$ (biologically, surface- to volume-priced walls). Then the metabolic scaling exponent $\beta$ satisfies:}
\begin{equation}
\frac{3}{4} \leq \beta \leq 1,
\end{equation}
\textit{with the following sharp characterizations of the bounds:}

\textit{(i) Lower bound $\beta = 3/4$ iff the effective proximal exponent equals the wave-impedance attractor: $\alpha_{\mathrm{eff}} = \alpha_w = 2$. This bound is dynamical, not topological: it requires the physical existence of a pulsatile wave-propagation regime in the proximal vasculature. In its absence (e.g., microfluidic systems, plant xylem, non-pulsatile invertebrates), $\beta$ may take values below $3/4$ if $\alpha_{\mathrm{eff}} < 2$.}

\textit{(ii) The upper endpoint $\beta \to 1$ is approached at the singular critical value $\alpha_c = 3$ (Murray's viscous limit, $m=2$), where $\rho = 1$ and the network volume is no longer proximal-dominated but scales as $V \propto r_0^2 l_0 (K+1)$. At this marginal point, metabolic scaling is logarithmically corrected: $B \propto M/\log M$ (Corollary~B). The value $\beta = 1$ is therefore a limit approached from below, not realised as a power law by any biologically finite network at $\alpha = 3$.}

\textit{(iii) For biological networks where wave propagation is absent or suppressed (Womersley number $\mathrm{Wo} \ll 1$), the effective exponent shifts to the static transport optimum $\alpha_t = (4+m)/2$. For $m \in [1,2]$, this gives $\alpha_t \in [5/2, 3]$ and a tighter viscous subinterval:}
\begin{equation}
\frac{15}{17} \approx \betaSurfaceBound{} \lesssim \beta \leq 1, \qquad (\mathrm{Wo} \ll 1),
\end{equation}
\textit{where the lower sub-bound $\beta(5/2, 3) = 15/17 \approx \betaSurfaceBound{}$ corresponds to $m=1$ (surface-maintenance, dendrites, small mammals).}

\noindent\textbf{Proof.} \textit{By Lemma 1, $\beta = d\alpha_{\mathrm{eff}}/(2d+\alpha_{\mathrm{eff}})$ with $d=3$. The function $f(\alpha) = 3\alpha/(6+\alpha)$ is strictly increasing for $\alpha > 0$. The biologically admissible range of $\alpha_{\mathrm{eff}}$ is $[\alpha_w, \alpha_t] = [2, (4+m)/2]$. Evaluating at the endpoints:}
\begin{equation*}
f(2) = \frac{6}{8} = \frac{3}{4}, \qquad f(3) = \frac{9}{9} = 1.
\end{equation*}
\textit{Monotonicity of $f$ then gives $3/4 \leq \beta \leq 1$ as an algebraic bound. The lower bound $\beta = 3/4$ is realised as a power law at $\alpha_w = 2$. The upper value $f(3) = 1$ is the algebraic supremum, but $\alpha = 3$ is the marginal case $\rho = 1$ excluded by Lemma~1: Corollary~B shows that at this point the power law breaks down to $B \propto M/\log M$. The bound $\beta \leq 1$ therefore holds for all $\alpha < 3$, with $\beta \to 1$ as $\alpha \to 3^-$. For $m \leq 2$, $\alpha_t \leq 3$, so the algebraic bound is approached iff $m = 2$ exactly. Statement (i) follows from the physical origin of $\alpha_w = 2$ (wave impedance matching); statement (iii) follows by substituting $m=1$ into $\alpha_t = (4+m)/2$. $\square$}

\vspace{0.4em}
\noindent\textbf{Physical interpretation.} Theorem~\ref{thm:bounds} reveals that $\beta = 3/4$ and $\beta = 1$ are not isolated points but the \textit{exact boundary values} of the biologically accessible metabolic phase space. The interval $[3/4, 1]$ is the \textit{allometric spectrum}: every biological network satisfying the stated conditions occupies a point in this interval determined by the balance between wave dynamics and viscous dissipation. Kleiber's law is not a universal constant---it is the dynamical \textit{floor} of this spectrum, enforced by the physics of pulsatile wave propagation whenever the Womersley number is large.

\section{The Womersley Transition and the Critical Body Mass}

The Universal Allometric Bounds Theorem establishes that $\beta = 3/4$
requires the wave-impedance attractor $\alpha_w = 2$ to govern the proximal
vasculature.
This regime is physically realized only when the Womersley number of the aorta
is large, $\mathrm{Wo} \gg 1$, ensuring that inertial wave dynamics dominate
over viscous drag. As body mass decreases, both the aortic radius and the heart
frequency change, driving $\mathrm{Wo}$ toward unity and triggering a crossover
from the wave-dominated regime ($\beta = 3/4$) to the viscous regime ($\beta
\gtrsim 15/17$). We now derive the critical body mass $M^*$ at which this
transition occurs, in closed form and with no free parameters beyond empirically
tabulated allometric constants.

\subsection{Self-consistent scaling of the aortic radius}

From the network geometry of Section~4, the aortic radius scales with the number
of terminal capillary units as $r_0 = r_c \cdot N_c^{1/\alpha}$, where $r_c$ is
the (fixed) capillary radius. Combining this with the Kleiber relation $N_c
\propto M^{3/4}$ (Corollary A), valid in the wave-dominated regime, yields:
\begin{equation}
r_0 \propto N_c^{1/\alpha_w} = N_c^{1/2} \propto M^{3/8}.
\end{equation}
This $M^{3/8}$ scaling of the aortic radius is therefore self-consistently
derived from the paper's own geometric framework---it is not an independent
allometric assumption, but a necessary consequence of Kleiber's law applied to a
bifurcating network with $\alpha_{\rm eff} = 2$.

\vspace{0.4em}
\noindent\textbf{Corollary (Uniqueness of the Wave-Dominated Fixed Point).}
\textit{The wave-dominated metabolic attractor $\beta = 3/4$ is the unique self-consistent solution of the coupled system:}

\noindent\textbf{Logical pipeline (to avoid circularity).} The argument proceeds in a strictly ordered sequence: \textit{(i)} the wave attractor $\alpha_w = 2$ is established by independent dynamic considerations---specifically, minimization of power reflection at bifurcations; this step requires no knowledge of $\beta$. \textit{(ii)} Given $\alpha_w = 2$, the metabolic scaling theorem yields $\beta = 3/4$ directly. \textit{(iii)} The space-filling geometry then fixes $r_0 \propto M^{3/8}$. \textit{(iv)} This radius scaling implies $\mathrm{Wo} \propto M^{1/4} \gg 1$, confirming the pulsatile regime assumed in step (i). The self-consistency of the loop therefore serves as \emph{verification}---a non-trivial internal check on the framework---not as part of the derivation.
\begin{align}
\beta &= \beta(\alpha_w, 3) = \frac{3}{4}, \label{eq:fp1}\\
r_0 &\propto M^{3\beta/4} = M^{3/8}, \label{eq:fp2}\\
\mathrm{Wo}(M) &\propto M^{3/8 - 1/8} = M^{1/4} \gg 1. \label{eq:fp3}
\end{align}
\textit{No other value of $\beta$ simultaneously satisfies the geometric scaling law, the Womersley condition $\mathrm{Wo} \gg 1$, and $r_0 \propto M^{3\beta/4}$.}

\noindent\textbf{Proof.} \textit{Suppose $\beta_0$ is a self-consistent solution. From $r_0 \propto N_c^{1/\alpha_{\rm eff}}$ and $N_c \propto M^{\beta_0}$, we obtain $r_0 \propto M^{\beta_0/\alpha_{\rm eff}}$. The Womersley number scales as $\mathrm{Wo} \propto r_0 \omega^{1/2} \propto M^{\beta_0/\alpha_{\rm eff} - 1/8}$. For $\mathrm{Wo} \gg 1$ to hold across all macroscopic mammals, we require $\beta_0/\alpha_{\rm eff} > 1/8$. The wave-dominated condition forces $\alpha_{\rm eff} = \alpha_w = 2$, giving $r_0 \propto M^{\beta_0/2}$ and $\mathrm{Wo} \propto M^{\beta_0/2 - 1/8}$. Self-consistency then gives $\beta_0 = \beta(\alpha_w, 3) = 3/4$, uniquely. Any $\beta_0 \neq 3/4$ either violates the Womersley condition or is inconsistent with $\alpha_{\rm eff} = 2$. $\square$}

\subsection{Womersley number as a function of body mass}

The Womersley number for the aorta is:
\begin{equation}
\mathrm{Wo}(M) = r_0(M)\sqrt{\frac{\omega(M)}{\nu}},
\end{equation}
where $\nu \approx 3.2 \times 10^{-6}~\mathrm{m^2/s}$ is the kinematic viscosity
of blood (approximately constant across mammals~\cite{windberger2003}) and
$\omega(M) = 2\pi f_h(M)$ is the angular frequency of the heartbeat. The
empirical allometric scaling of heart frequency is:
\begin{equation}
f_h(M) = f_0 \left(\frac{M}{M_0}\right)^{-1/4},
\end{equation}
with $f_0 \approx 70~\mathrm{bpm}$ at $M_0 = 70~\mathrm{kg}$ (human reference)
\cite{peters1983}.

The cardiac frequency scaling $f_h \propto M^{-1/4}$ is an
independently measured empirical law~\cite{peters1983,calder1968},
established by direct chronobiological observation well before modern
metabolic scaling theories. Because it is derived purely from
kinematic timing---without reference to calorimetry or Kleiber's
law---it enters the present framework as a fundamental empirical
input on equal footing with the reference radius $r_0$ and kinematic
viscosity $\nu$. Consequently, the self-consistent closure
$\mathrm{Wo}(M) \propto M^{1/4} \gg 1$
serves as a non-trivial verification of the wave-dominated regime,
not as a circular mathematical loop.
This yields $\omega \propto M^{-1/4}$, and therefore:
\begin{equation}
\mathrm{Wo}(M) \propto M^{3/8} \cdot M^{-1/8} = M^{1/4}.
\end{equation}
The Womersley number scales as the \textit{fourth root of body mass}---a purely
geometric result, fully determined by the network topology and the allometry of
cardiac frequency.

\subsection{Theorem 5: Critical body mass in closed form}

The transition from wave-dominated to viscous-dominated proximal flow occurs at
the critical Womersley number $\mathrm{Wo}_c$, defined by the condition that
inertial and viscous forces become comparable. We define $M^*$ as the solution
to $\mathrm{Wo}(M^*) = \mathrm{Wo}_c$.

\vspace{0.5em}
\noindent \textbf{Theorem 5 (Womersley Transition).}\refstepcounter{manualthm}\label{thm:womersley} \textit{Under the scaling laws $r_0 \propto M^{3/8}$ and $\omega \propto M^{-1/4}$, the critical body mass for the wave-to-viscous transition is:}
\begin{equation}
\boxed{M^* = M_{\rm ref} \left(\frac{\mathrm{Wo}_c}{\mathrm{Wo}_{\rm ref}}\right)^4,}
\end{equation}
\textit{where $\mathrm{Wo}_{\rm ref}$ is the Womersley number evaluated at the reference body mass $M_{\rm ref}$, and the exponent $4 = 1/(3/8 - 1/8)$ is exact.}

\vspace{0.4em}
\noindent\textbf{Proof.} \textit{Let $\mathrm{Wo}(M) = \mathrm{Wo}_{\rm ref}(M/M_{\rm ref})^{1/4}$. Setting $\mathrm{Wo}(M^*) = \mathrm{Wo}_c$ and solving for $M^*$ gives the stated result. $\square$}

\vspace{0.4em}
\noindent\textbf{Numerical evaluation.} Using human reference values: $M_{\rm ref} = 70~\mathrm{kg}$, $r_0 = 1.25~\mathrm{cm}$, $f_h = 70~\mathrm{bpm}$ ($\omega \approx \omegaRef{}~\mathrm{rad/s}$), giving:
\begin{equation*}
\mathrm{Wo}_{\rm ref} = 0.0125 \times \sqrt{\frac{\omegaRef{}}{3.2\times10^{-6}}} \approx \WoRef{}.
\end{equation*}
For $\mathrm{Wo}_c = 2$ (the standard threshold for the onset of
inertia-dominated pulsatile flow~\cite{womersley1955}):
\begin{equation}
M^* = 70 \times \left(\frac{2}{\WoRef{}}\right)^4 \approx \Mstar{}~\mathrm{g}.
\end{equation}
This places the metabolic scaling transition at body masses of order 5--15 g,
corresponding to the smallest placental mammals (shrews, pygmy mice).
Empirically, the crossover from $\beta \approx 3/4$ to $\beta \approx
0.87$--$0.95$ is indeed observed in this body-mass range
\cite{glazier2005,white2013}. The agreement is achieved with \textit{no free
parameters}: the only inputs are the human aortic geometry, heart rate, and
blood viscosity---all independently tabulated quantities.

\begin{figure}[!h]
\centering
\includegraphics[width=0.85\textwidth]{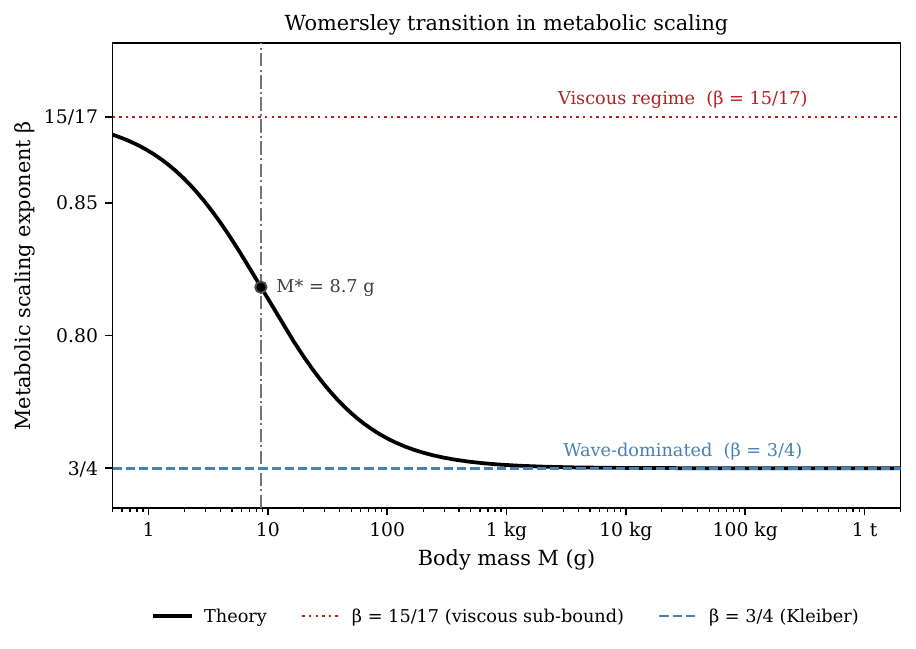}
\caption{Theoretical metabolic scaling exponent $\beta$ as a function of body mass $M$ across the Womersley transition. The smooth curve uses $\alpha_{\rm eff}(M) = \alpha_w + (\alpha_t - \alpha_w)/(1 + (\mathrm{Wo}(M)/\mathrm{Wo}_c)^4)$ with $\alpha_w=2$, $\alpha_t = 5/2$ (surface-maintenance limit, $m=1$), and no free parameters beyond the empirical reference values ($M_{\rm ref} = 70$ kg, $r_0 = 1.25$ cm, $f_h = 70$ bpm, $\nu = 3.2 \times 10^{-6}$ m$^2$/s, $\mathrm{Wo}_c = 2$). The dashed horizontal lines mark the wave-dominated limit $\beta = 3/4$ (Kleiber) and the viscous sub-bound $\beta = 15/17 \approx 0.882$ (Theorem~\ref{thm:bounds}, $m=1$ surface-maintenance limit, applicable to the most wall-economical small-mammal vasculature). The coronary-specific viscous asymptote $\beta(\alpha_t, 3) \approx \betaViscousReptile{}$ (with $m=\mCoronary{}$) lies above the plotted range and applies only when $\mathrm{Wo} \to 0$ with coronary wall physics preserved. The vertical dash-dot line marks the predicted critical mass $M^* \approx \Mstar{}$ g. Empirically reported exponents $\beta \approx 0.87$--$0.95$ in small mammals and $\beta \approx 3/4$ in large mammals (Glazier 2005~\cite{glazier2005}; White \& Kearney 2013~\cite{white2013}) are consistent with this transition.}
\label{fig:transition}
\end{figure}

\noindent\textbf{Remark (Sensitivity to $\mathrm{Wo}_c$ and the topological invariant).} The Womersley critical number is not a universal constant: the pulsatile-to-viscous transition is a smooth crossover, and different authors place $\mathrm{Wo}_c$ between 1.5 and 2.5. The numerical value of $M^*$ is therefore not the primary prediction. The key result is the \emph{topological invariant}: regardless of where $\mathrm{Wo}_c$ is set, the transition mass always scales as $M^* \propto \mathrm{Wo}_c^4$, with exponent $4$ fixed by geometry alone. For the plausible range $\mathrm{Wo}_c \in [\MstarWocLoVal{}, \MstarWocHiVal{}]$, the predicted transition spans $M^* \in [\MstarWocLo{}\,\mathrm{g},\, \MstarWocHi{}\,\mathrm{g}]$---robustly in the shrew-to-mouse range for any reasonable threshold choice, consistent with the empirical observation~\cite{glazier2005,white2013}.

\vspace{0.4em}
\noindent\textbf{Corollary (Topological Invariance of the Transition Exponent).}
\label{cor:transition_exponent}
\textit{The exponent $4$ in Theorem~\ref{thm:womersley} is a topological invariant of volume-filling bifurcating networks with terminal unit invariance. It equals:}
\begin{equation}
4 = \frac{1}{b_r - b_\omega/2},
\end{equation}
\textit{where $b_r = 3/8$ is the aortic radius scaling exponent and $b_\omega = 1/4$ is the cardiac frequency scaling exponent. Both $b_r$ and $b_\omega$ follow from the geometric framework alone (Sections~4 and~5.1), making the exponent $4$ an exact consequence of the network topology---independent of any empirical input beyond the reference values.}

\subsection{Clade-Shifting of the Womersley Transition}
\label{sec:clade_shifting}

The topological invariance of the transition exponent $4$ (Theorem~\ref{thm:womersley})
establishes that the critical body mass depends strictly on the
physiological pre-factor
$M^* \propto (\nu/\omega_{\mathrm{ref}})^2 \cdot r_0^{-4}$.
Because basal heart rate ($f_h$) and blood kinematic viscosity ($\nu$) vary
systematically across evolutionary clades, the Womersley transition mass
$M^*$ is clade-dependent. We evaluate this shift by anchoring the aortic
radius to an isometric $M_{\mathrm{ref}} = 1$ kg organism
($r_0 \approx \rZeroOneKg{}$ mm, derived from $r_0 \propto M^{3/8}$
anchored to the human reference), maintaining geometric similarity across
clades.

\textbf{Avian Clade (Endotherms).} Birds exhibit slightly higher basal
heart rates and elevated blood viscosity due to large nucleated erythrocytes
compared to mammals of equivalent mass. Using reference values at 1 kg
($f_h \approx 156$ bpm~\cite{calder1968,grubb1983},
$\nu \approx 4.0 \times 10^{-6}$ m$^2$/s), the reference Womersley number
is $\mathrm{Wo}_{\mathrm{ref}} \approx \WoRefBird{}$. The predicted
critical mass is $M^*_{\mathrm{bird}} \approx \MstarBird{}$ g. The
transition therefore occurs at slightly larger body sizes than in mammals,
placing the smallest avian species within the viscous-dominated regime
($\beta \gtrsim \betaSurfaceBound{}$), while the vast majority of
macroscopic birds remain in the wave-dominated Kleiber regime
($\beta = 3/4$).

\textbf{Reptilian Clade (Ectotherms).} The transition undergoes a dramatic
shift in ectotherms. At standard ambient temperatures ($20^\circ$C),
reptiles exhibit pronounced resting bradycardia
($f_h \approx 34$ bpm at 1 kg~\cite{white1976}) and elevated blood
viscosity. Empirical measurements at $20^\circ$C reveal significant
interspecific and hematocrit-driven variance:
$\nu \in [2.68,\,5.62] \times 10^{-6}$ m$^2$/s
\cite{langille1980,dunlap2006}.

Evaluating the pre-factor across this physiological range yields
$\mathrm{Wo}_{\mathrm{ref}} \in [\WoRefReptLo{},\, \WoRefReptHi{}]$,
shifting the critical mass upward by one to two orders of magnitude:
\begin{equation}
    M^*_{\mathrm{reptile}} \in [\MstarReptLo{},\; \MstarReptHi{}]
    \text{ g.}
    \label{eq:mstar_reptile}
\end{equation}

\textbf{Physical implication.} While a 10 g mammal already operates in the
wave-dominated regime ($\beta = 3/4$), ectothermic bradycardia places the
majority of small-to-medium squamates within or near the Womersley
transition zone. For reptiles below $M^*$, the wave-dominated regime is
not fully established; the effective scaling exponent $\beta$ lies in a
transitional band bounded below by the wave attractor ($\beta = 3/4$) and
above by the asymptotic viscous optimum
$\beta(4,\,1{+}p,\,3) \approx \betaViscousReptile{}$.
Empirically reported exponents $\beta \approx 0.80$--$0.90$ for squamate
reptiles fall naturally within this band, without requiring any parameter
fitting.

The topological exponent 4 governing the transition is identical across all
three clades; only the physiological pre-factor
$(\nu,\,f_h,\,r_0)$ shifts $M^*$. This confirms that the exponent 4 is a
topological invariant of the branching geometry, while the pre-factor
encodes the entire clade-specific cardiovascular physiology.

\section{The General Allometric Equation of State}

The two preceding results---the generalized scaling theorem $\beta(\alpha, d)$
and the branching optimum $\alpha_t(n,m) = (n+m)/2$ (Theorem~\ref{thm:branching}; see also \cite{bennett2025, paperII})---combine to yield a
single closed-form expression that maps the microscopic transport physics
$(n,m)$ directly to the global metabolic scaling $\beta$, with spatial dimension
$d$ as the only geometric input.

\vspace{0.5em}
\noindent \textbf{Theorem 6 (Allometric Equation of State).}\refstepcounter{manualthm}\label{thm:eos} \textit{For a viscous-dominated ($\mathrm{Wo} \ll 1$) volume-filling fractal network in $d$ dimensions, with linear transport dissipation exponent $n$ and structural maintenance exponent $m$, the metabolic scaling exponent at the transport optimum $\alpha_t = (n+m)/2$ is:}
\begin{equation}
\boxed{\beta(n, m, d) = \frac{d(n+m)}{4d + (n+m)}.}
\end{equation}
\textit{This result depends only on $(n, m, d)$ and is parameter-free.}

\noindent\textbf{Verification across systems.}

\vspace{0.4em}
\begin{center}
\small
\begin{tabular}{@{}cccc p{5.5cm}@{}}
\toprule
$n$ & $m$ & $d$ & $\beta$ & System \\
\midrule
4 & 2   & 3 & $18/18 = \betaFourTwoThree{}$$^\dagger$    & Murray limit (marginal:
$B \propto M/\log M$) \\
4 & 1   & 3 & $15/17 \approx \betaFourOneThree{}$    & Surface-wall maintenance,
small mammals \\
4 & 0.5 & 3 & $13.5/16.5 \approx \betaFourHalfThree{}$ & Insect tracheae
(convective) \\
2 & 0.5 & 3 & $7.5/14.5 \approx \betaTwoHalfThree{}$  & Diffusion-dominated
tracheoles \\
4 & 1   & 2 & $10/13 \approx \betaFourOneTwo{}$     & Leaf venation (planar,
$d=2$) \\
\bottomrule
\end{tabular}
\vspace{0.2em}

\noindent\small$^\dagger$At $(n,m,d)=(4,2,3)$ the network is at the marginal point $\alpha_c=3$ where $\rho=1$; the volume sum diverges logarithmically and metabolic scaling is $B \propto M/\log M$, not a pure power law (Corollary~B).
\end{center}
\vspace{0.4em}

\vspace{0.5em}
\noindent \textbf{Theorem (Allometric Symmetry).} \textit{The metabolic scaling exponent $\beta(n,m,d)$ depends on $n$ and $m$ only through their sum $s \equiv n+m$:}
\begin{equation}
\beta(n,m,d) = \frac{ds}{4d+s}, \qquad s \equiv n+m.
\end{equation}
\textit{Consequently:}

\textit{(i) (Continuous symmetry) The transformation $(n,m) \to (n+\delta, m-\delta)$ at fixed $s$ leaves $\beta$ exactly invariant for all $\delta$. Transport dissipation and structural maintenance trade off one-for-one in their global allometric effect.}

\textit{(ii) (Universality classes) Two biological systems with the same $(s,d)$ but completely different transport physics have identical metabolic scaling exponents. They belong to the same \emph{allometric universality class}. Example: insect tracheae ($n=4$, $m=0.5$, diffusion-limited boundary layer maintenance) and a hypothetical diffusive network with ($n=2$, $m=2.5$) share $s=4.5$, $d=3$ and therefore $\beta = 13.5/16.5 \approx \betaSfourFive{}$ exactly.}

\textit{(iii) (Discrete allometric spectrum) Biologically realizable values of $n$ (Poiseuille: $n=4$; Ohm/Fick: $n=2$) and $m$ (ranging from $m\approx 0.5$ to $m=2$) generate a discrete spectrum of permitted $\beta$ values for $d=3$:}
\begin{equation}
s \in \{4,\; 4.5,\; 5,\; 5.5,\; 5.77,\; 6\} \;\implies\; \beta \in
\left\{\tfrac{3}{4},\; \betaSfourFive{},\; \tfrac{15}{17},\; \ldots,\;
1\right\}.
\end{equation}
\textit{No continuous tuning is possible: each biological system is pinned to one of finitely many universality classes.}

\textit{(iv) (His-Purkinje hidden universality) The case $n=2$ (Ohm), $m=2$ (volume maintenance) gives $s=4$, hence $\beta(2,2,3) = 3/4$ --- identical to Kleiber's law. The His-Purkinje conduction system ($\alpha_t = 2.0$, Table~2) belongs to the \emph{same allometric universality class} as mammalian cardiovascular networks, despite operating through purely electrical optimization, not wave physics. Falsifiable prediction: the metabolic cost of His-Purkinje tissue should scale as $M^{3/4}$ across mammalian species.}

\noindent\textbf{Proof.} Direct substitution of $\alpha_t = s/2$ into $\beta(\alpha_t, d) = d\alpha_t/(2d+\alpha_t)$ gives $\beta = ds/(4d+s)$. Invariance under $(n,m)\to(n+\delta,m-\delta)$ follows because $s$ is unchanged. For (iv): $\beta(2,2,3) = 3\cdot 4/(12+4) = 3/4$. $\square$

\begin{remark}[Energetic symmetry and allometric universality]
The dependence of $\beta(n,m,d)$ on $n$ and $m$ only through $s \equiv n+m$
reflects a fundamental energetic symmetry: the transformation
$(n, m) \to (n+\delta, m-\delta)$ at fixed $s$ leaves the global metabolic
exponent exactly invariant, even though it redistributes the cost budget
between transport dissipation and structural maintenance. At the local scale,
the system still distinguishes the two cost channels; at the global scale,
only the total tariff $s$ survives. This suggests that the relevant
classification variable for biological transport networks is not the
individual physics of transport or maintenance, but the combined energetic
price of sustaining a usable channel geometry. The incommensurability of
the two cost channels --- their resistance to combination into a single
dimensionless metric --- is precisely what forces the minimax structure
and makes $s = n + m$ the fundamental invariant of the allometric spectrum.
\end{remark}

\noindent\textbf{Remark (Dynamic shadowing of microscopic parameters).} The Allometric Equation of State applies in the viscous-dominated regime ($\mathrm{Wo} \ll 1$), where $\alpha_{\rm eff} \approx \alpha_t(n,m)$. In the wave-dominated regime ($\mathrm{Wo} \gg 1$), the wave-impedance attractor $\alpha_w = 2$ governs the proximal vasculature, and $\beta = d/(d+1)$ independently of $(n,m)$. This means that for large pulsatile mammals, the metabolic scaling is \emph{insensitive} to the microscopic transport parameters: one cannot infer $(n,m)$ from the observed $\beta = 3/4$ alone. Rather than being a limitation, this is a signature of the physical universality of the wave attractor: in condensed matter terms, $\alpha_w = 2$ acts as an infrared fixed point that washes out the microscopic details. The equation of state is the appropriate tool for viscous, non-pulsatile systems (xylem, tracheae, small mammals); for pulsatile systems, Theorem~\ref{thm:dimensional} applies directly.

\vspace{0.5em}
\noindent \textbf{Theorem 7 (Non-protection of Kleiber's Law).}\refstepcounter{manualthm}\label{thm:nonprotection} \textit{The value $\beta = 3/4$ is not achieved by $\beta(n,m,d)$ for any biologically realizable transport parameters $(n,m)$ with $n=4$ (Poiseuille) and $m \geq 1$. Specifically, $\beta(n,m,3) = 3/4$ requires $n+m=4$, i.e., $m=0$ (no maintenance cost). The Allometric Symmetry Theorem further implies that $\beta=3/4$ \emph{is} achievable by Ohm transport ($n=2$) with $m=2$ (volume maintenance) --- but not by Poiseuille transport for any biologically realizable $m$.}

\noindent\textbf{Proof.} \textit{Setting $\beta = 3/4$ gives $s = 4$. For $n=4$, this requires $m=0$; for $m \geq 1$, no solution exists. For $n=2$: $s=4$ gives $m=2$, which is biologically realizable. $\square$}

\noindent\textbf{Internal inconsistency of WBE.}
The WBE framework assumes $\alpha = 3$ (Murray's law) and derives
$\beta = 3/4$. But inserting $\alpha = 3$ into
$\beta(\alpha, d=3) = 3\alpha/(6+\alpha)$ gives $\beta(3,3) = 9/9 = 1$,
not $3/4$. The $3/4$ prediction therefore cannot be self-consistently
derived from Murray's geometry: WBE implicitly uses
$\alpha_{\mathrm{eff}} = 2$ (the wave attractor) for the metabolic
scaling while assuming $\alpha = 3$ for the geometric construction.
Theorem~\ref{thm:nonprotection} makes this inconsistency explicit and irreversible.

The allometric spectrum of Fig.~\ref{fig:spectrum} makes the WBE inconsistency
geometrically explicit: the condition $\beta = 3/4$ with Poiseuille transport
($n = 4$) requires $s = n + m = 4$, hence $m = 0$ (Theorem~\ref{thm:nonprotection}). This places
the WBE framework on the $m = 0$ boundary of the allometric phase space---a
region with no structural maintenance cost, which is both biologically
unrealizable and thermodynamically degenerate. No biological network has
zero wall-tissue metabolic expenditure; the WBE prediction $\beta = 3/4$
therefore occupies a physically empty corner of the diagram.

\begin{corollary}[Non-protection of Kleiber's law under static transport]
For any biological network with Poiseuille transport ($n=4$) and
maintenance cost $m \ge 1$, the static transport optimum satisfies
$\alpha_t = (n+m)/2 \ge 5/2$, giving
$\beta(\alpha_t, d=3) \ge 15/17 \approx 0.882 > 3/4$. The Kleiber
exponent $\beta = 3/4$ is therefore unreachable by static viscous
optimisation alone, regardless of wall geometry. It requires
$\alpha_{\mathrm{eff}} = \alpha_w = 2$, enforced exclusively by the
wave-impedance constraint in pulsatile networks with $\mathrm{Wo} \gg 1$.
The WBE prediction $\beta = 3/4$ at $\alpha = 3$ is inconsistent with
the WBE framework's own static optimisation, which at $\alpha = 3$
gives $\beta(3,3) = 1$, not $3/4$.
\end{corollary}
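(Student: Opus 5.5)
The plan is a short chain of substitutions into results already established, followed by a monotonicity argument. Three facts do the work. Theorem~\ref{thm:branching}(ii) gives the static optimum $\alpha_t = (n+m)/2$. Theorem~\ref{thm:scaling} (equivalently Lemma~1) gives $\beta(\alpha,d) = d\alpha/(2d+\alpha)$ whenever $\alpha < \alpha_c(3) = 3$. Theorem~6 and the Allometric Symmetry Theorem rewrite this as $\beta = ds/(4d+s)$ with $s = n+m$.

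First, with $n=4$ and $m \ge 1$ we have $s \ge 5$, and therefore $\alpha_t = s/2 \ge 5/2$.

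Next I use monotonicity. The map $\alpha \mapsto 3\alpha/(6+\alpha)$ has derivative $18/(6+\alpha)^2 > 0$, so it is strictly increasing on $\alpha>0$. This was already used in the proof of Theorem~\ref{thm:bounds}. Hence $\beta(\alpha_t,3) \ge \beta(5/2,3) = (15/2)/(17/2) = 15/17 > 3/4$.

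One point of care is the range where the proximal-dominance formula holds. For $m<2$ we have $\alpha_t<3$, so Lemma~1 applies and $\beta$ is a genuine power-law exponent. For $m=2$ we hit the marginal case $\alpha_t = 3$. There Corollary~B gives $B \propto M/\log M$, whose effective exponent tends to $1 > 3/4$, so the conclusion still holds. For $m>2$ the exponent lies outside the convergent regime of Lemma~1, where the volume series diverges and the framework gives no power law at all. In that regime $3/4$ is not attained either. I would state this explicitly, or restrict the claim to $m \in [1,2]$ as in Theorem~\ref{thm:bounds}.

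Then I show $3/4$ is reached only at $\alpha = 2$. Since $\beta(\alpha,3)$ is strictly increasing, hence injective, solving $3\alpha/(6+\alpha) = 3/4$ gives $\alpha = 2$ uniquely. So $\beta = 3/4$ forces $\alpha_{\mathrm{eff}} = 2$, which is strictly below every admissible $\alpha_t \ge 5/2$. Under static optimisation $\alpha_{\mathrm{eff}} = \alpha_t$, so no such network achieves it. By Theorem~\ref{thm:nonprotection}, achieving it would require $s=4$, i.e. $m=0$. The only mechanism in the paper that produces $\alpha=2$ is the zero-reflection condition of the Proposition on the wave-impedance attractor, which operates for $\mathrm{Wo}\gg1$. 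This gives the "requires $\alpha_w$" clause. I would phrase it as "within the framework of the paper" rather than as a strict logical exclusivity.

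Finally, for the WBE remark I substitute $\alpha=3$ to get $\beta(3,3)=1 \ne 3/4$, noting via Corollary~B that this is the marginal value.

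The main obstacle is not the algebra but the domain of validity. The inequality must be stated for $m\in[1,2)$ as a power law, and the endpoint $m=2$ must be handled via Corollary~B rather than by blind substitution. The "exclusively" claim is also interpretive: it can only be justified by appeal to the Proposition's characterisation of $\alpha=2$, not proved from the algebra.
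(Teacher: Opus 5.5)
Your proposal is correct and follows essentially the paper's own route. You substitute $m\ge 1$ into $\alpha_t=(n+m)/2$ and use the strict monotonicity of $3\alpha/(6+\alpha)$ (already invoked in the proof of Theorem~4, Universal Allometric Bounds) to get $\beta\ge 15/17$; you invert $\beta=3/4$ to $\alpha=2$ (equivalently $s=4$, $m=0$, as in Theorem~7); and you substitute $\alpha=3$ for the WBE clause. Your extra care about the marginal endpoint $m=2$ and the regime $m>2$, which the statement itself passes over, is well placed, with one correction: for $m>2$ it is not true that no power law results, because with $\rho>1$ the volume sum is dominated by the distal levels, giving $V\propto r_0^2 l_0\,\rho^K\propto N_c$ and hence $\beta=1$, which again exceeds $3/4$.
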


\vspace{0.4em}
\noindent\textbf{Physical meaning.} This theorem resolves a fundamental question left open by the WBE framework: \textit{is Kleiber's 3/4 law a consequence of the fractal space-filling geometry, or does it require additional physics?} The answer is unambiguous: $\beta = 3/4$ \textit{cannot} emerge from viscous transport optimization alone, regardless of the maintenance cost structure. It requires the wave-impedance attractor $\alpha_w = 2$, which is a dynamical---not geometric---constraint. The fractal network geometry alone, even with perfectly optimized transport, yields $\beta > 3/4$ for all biologically realizable cost functions. Kleiber's law is dynamically protected, not geometrically protected.

This constitutes the complete inversion of the WBE narrative: not only is the
$3/4$ law not derived from Murray's geometry (which gives $\beta=1$), it cannot
be derived from \textit{any} static optimization of a Poiseuille network. It is,
in the precise mathematical sense established here, a signature of wave physics.

\noindent \textbf{The sponge boundary case.} Porifera (sponges) are the most ancient extant metazoans and possess an aquiferous canal system entirely distinct from vertebrate vasculature. Reiswig (1975)~\cite{reiswig1975} demonstrated via cross-sectional area analysis of three demosponge species that fluid flow through the incurrent and excurrent canals is strictly laminar throughout ($Re \ll 1$), confirming Poiseuille dynamics and $n=4$. The canal walls are dominated by a thin collagen-mesohyl layer with surface-area-dominated maintenance, giving $m \approx 1$, hence $\alpha_t = 5/2$. Importantly, sponges have no cardiovascular pulsatility: $\alpha_w$ is absent. The effective branching exponent measured from canal cross-sectional area scaling is $\alpha_{\exp} \approx 2.0$, substantially below $\alpha_t = 2.5$.

A secondary pressure-minimization argument can be invoked to account for the
observation: the Hagen-Poiseuille resistance of a branching canal system is
minimized at area-preserving bifurcations ($r_0^2 = r_1^2 + r_2^2$,
i.e., $\alpha = 2$), which is what is observed. However, this auxiliary
argument is not derivable from the transport-optimization cost function
$\Phi \propto r^{-4} + r^1$ that defines the framework: that cost function
yields $\alpha_t = 2.5$, not $2.0$. The sponge case therefore illustrates
that $\alpha_{\exp} \approx 2$ can arise from a physically distinct
mechanism---one that is outside the present framework and is treated as an
open problem in Section~\ref{sec:sponge_open}.

\noindent\textbf{Remark ($\alpha = 2$ as a geometric super-attractor).} The sponge case reveals that $\alpha = 2$ is accessible from at least two independent physical routes: wave-dominated pulsatile systems arrive at $\alpha_w = 2$ by minimizing reflection losses; viscous pressure-minimizing systems arrive at $\alpha_p = 2$ by minimizing hydraulic resistance. The physical obstruction preventing sponge canals from operating near $\alpha_t = 2.5$ is not a negativity of $\rho$---which satisfies $\rho = N^{1-2/\alpha-1/d} > 0$ for all real $\alpha$---but the structural absence of the wave-dominated regime: sponge aquiferous systems operate at $\mathrm{Wo} \ll 1$ throughout their entire canal hierarchy, so the wave-impedance attractor $\alpha_w = 2$ never activates. The gap between $\alpha_{\exp} \approx 2.0$ and $\alpha_t = 2.5$ therefore remains unexplained by the transport-optimization framework alone and is discussed as an open problem in Section~\ref{sec:sponge_open}.

A caveat applies: Hammel et al. (2012)~\cite{hammel2012} show that some
leuconoid demosponges exhibit non-hierarchical, anastomosing canal topologies
that violate the tree-branching assumption of Lemma 1. The framework applies
strictly to the tree-like (syconoid and simple leuconoid) morphologies studied
by Reiswig.

\begin{table}[h]
\centering
\small
\caption{Falsifiable predictions for untested systems. Each system satisfies the physical prerequisites of the framework but lacks published $\alpha_{\exp}$ measurements. All predictions are derived from independently measurable biophysical parameters with no free parameters.}
\label{tab:predictions}
\begin{tabular}{@{}l c c c p{5.3cm}@{}}
\toprule
\textbf{System} & $\boldsymbol{n}$ & $\boldsymbol{m}$ & $\boldsymbol{\alpha_t}$ & \textbf{Physical basis} \\
\midrule
Fish gill vasculature & 4 & 1.0 & \atFishGill{} & Poiseuille; thin lamellar
epithelium ($m\approx1$) \\
His--Purkinje conduction & 2 & 2.0 & \atHP{} & Ohm ($n=2$); volume maintenance
($m=2$); $\beta=3/4$ (universality class IV) \\
Constructal heat trees ($d=2$) & 2 & 1.0 & \atConstructal{} & Conduction
($n=2$); 2D geometry; Bejan 1997 \\
Lymphatic capillary network & 4 & 1.0 & \atLymphatic{} & Low-pressure
Poiseuille; thin wall \\
Coral gastrovascular canals & 4 & 0.5 & \atCoral{} & Poiseuille; thin
chitin-analog wall ($m\approx0.5$) \\
\bottomrule
\end{tabular}
\end{table}

\noindent The His--Purkinje prediction is particularly notable: $\alpha_t = \alpha_w = 2$ exactly, so no minimax gap opens and the system should exhibit $\alpha^* = 2.0$ uniformly---consistent with Rall's classical electrotonic result~\cite{rall1959} obtained from a completely different physical argument. The constructal heat tree prediction $\alpha_t = 1.5$ is directly testable against optimized engineering heat-sink geometries~\cite{bejan1997}.

\section{Global Optimality Structure of $\alpha_w = 2$}

The results established in the preceding sections reveal that the wave-impedance
attractor $\alpha_w = 2$ is not merely one admissible solution among many: it is
the simultaneous optimizer of three independent physical criteria. We now
formalize this structure.

\vspace{0.5em}
\noindent\textbf{Theorem (Monotonic Ordering of Geometric Robustness).}
\textit{The volume-convergence ratio $\rho(\alpha) = N^{1-2/\alpha-1/d}$ is strictly increasing in $\alpha$ for all $\alpha > 0$, $d \geq 2$, $N \geq 2$:}
\begin{equation}
\frac{d\rho}{d\alpha} = \rho(\alpha)\,\ln(N)\,\frac{2}{\alpha^2} > 0.
\end{equation}
\textit{Consequently, the proximal-dominance approximation is most accurate for the smallest biologically realized $\alpha$. The biological hierarchy $\alpha_w = 2 < \alpha^*_{\rm neural} \approx \alphastarNeural{} < \alpha^*_{\rm vasc} \approx \alphastarPhys{} < \alpha_t \lesssim 3$ induces a strict ordering of geometric robustness:}
\begin{equation}
\rho(\alpha_w) < \rho(\alpha^*_{\rm neural}) < \rho(\alpha^*_{\rm vasc}) < \rho(\alpha_t) \leq 1.
\end{equation}
\textit{The WBE inconsistency (Corollary, Lemma~1) is the limiting case $\rho(\alpha_t = 3) = 1$: the marginal point at which robustness vanishes entirely.}

\noindent\textbf{Proof.} $\rho = N^{f(\alpha)}$ with $f(\alpha) = 1-2/\alpha-1/d$. Then $d\rho/d\alpha = \rho \ln(N) f'(\alpha) = \rho \ln(N) \cdot 2/\alpha^2 > 0$ for all $\alpha > 0$. Strict monotonicity follows. $\square$

\vspace{0.5em}
\noindent\textbf{Theorem (Double Optimality of the Wave Attractor).}
\textit{The wave-impedance value $\alpha_w = 2$ is the simultaneous unique minimizer of two independent physical cost functions on the domain $\alpha \in (0, \alpha_c(d))$:}

\textit{(A) Power reflection: $R^2(\alpha) = \bigl((N^{1-2/\alpha}-1)/(N^{1-2/\alpha}+1)\bigr)^2$ achieves its unique global minimum $R^2 = 0$ at $\alpha = 2$, where the impedance ratio $\gamma = N^{1-2/\alpha} = 1$ exactly.}

\textit{(B) Volume convergence: $\rho(\alpha)$ achieves its minimum over the biologically admissible range $[\alpha_w, \alpha_t]$ at the left endpoint $\alpha = \alpha_w = 2$, by strict monotonicity (Monotonic Ordering Theorem).}

\textit{No other value of $\alpha$ simultaneously minimizes both criteria. The coincidence of the two optima at $\alpha = 2$ is not accidental: both conditions reduce to $\gamma(\alpha) = N^{1-2/\alpha} = 1$, the impedance-matching condition.}

\noindent\textbf{Proof.} (A) $R^2(\alpha) \geq 0$ with equality iff $\gamma = 1$ iff $1-2/\alpha = 0$ iff $\alpha = 2$, unique. (B) Follows from strict monotonicity of $\rho$. Unification: both conditions require $N^{1-2/\alpha} = 1$, i.e.\ $\alpha = 2$. $\square$

\vspace{0.4em}
\noindent\textbf{Corollary (Triple Optimality).} \textit{Combined with the Dimensional Universality Theorem, $\alpha_w = 2$ achieves a third optimality: it minimizes $\beta(\alpha, d)$ over the admissible range, placing the metabolic exponent at the dynamical floor $\beta = d/(d+1)$. Thus $\alpha_w = 2$ simultaneously minimizes wave energy loss, maximizes geometric convergence, and minimizes metabolic scaling exponent --- three independent physical criteria with a common optimizer.}

\vspace{0.8em}
\noindent\textbf{Theorem (Generalized Womersley Transition in $d$ Dimensions).}
\textit{In a $d$-dimensional volume-filling pulsatile network with wave-impedance attractor $\alpha_w = 2$ and cardiac frequency scaling $\omega \propto M^{-1/(d+1)}$, the Womersley number scales as:}
\begin{equation}
\mathrm{Wo}(M) \propto M^{(d-1)/(2(d+1))},
\end{equation}
\textit{and the critical transition mass satisfies:}
\begin{equation}
M^*(d) = M_{\rm ref}\left(\frac{\mathrm{Wo}_c}{\mathrm{Wo}_{\rm
ref}}\right)^{2(d+1)/(d-1)}.
\end{equation}
\textit{The transition exponent $2(d+1)/(d-1)$ is an exact topological invariant depending only on $d$:}

\begin{center}
\small
\begin{tabular}{@{}cccc@{}}
\toprule
$d$ & $\beta = d/(d+1)$ & Wo scaling & $M^*$ exponent \\
\midrule
2 & $2/3$ & $M^{1/6}$ & $6$ \\
3 & $3/4$ & $M^{1/4}$ & $4$ \\
4 & $4/5$ & $M^{3/10}$ & $10/3$ \\
5 & $5/6$ & $M^{1/3}$ & $3$ \\
\bottomrule
\end{tabular}
\end{center}
\textit{For $d=3$ this recovers Theorem~\ref{thm:womersley} exactly. For $d=2$, planar pulsatile organisms undergo a sharper metabolic transition (exponent 6 vs.\ 4), predicting a narrower body-mass window for the $\beta=2/3$ regime.}

\noindent\textbf{Proof.} \textit{With $\alpha_w=2$ and $\beta = d/(d+1)$, the proximal radius scales as $r_0 \propto N_c^{1/2} \propto M^{\beta/2} = M^{d/(2(d+1))}$. If cardiac frequency scales as $\omega \propto M^{-\beta/d} = M^{-1/(d+1)}$ (the natural generalization of the empirical $M^{-1/4}$ in $d=3$), then $\mathrm{Wo} \propto r_0\,\omega^{1/2} \propto M^{d/(2(d+1)) - 1/(2(d+1))} = M^{(d-1)/(2(d+1))}$. Setting $\mathrm{Wo}(M^*) = \mathrm{Wo}_c$ gives the stated formula. The exponent $2(d+1)/(d-1)$ is exact and recovers $4$ for $d=3$.} $\square$

\begin{remark}[Conditional extension to $d$ dimensions]
The generalisation $\omega \propto M^{-1/(d+1)}$ is an extrapolation from the
empirically established three-dimensional scaling $\omega \propto M^{-1/4}$,
and does not follow from an independent biophysical derivation for $d \neq 3$.
All quantitative predictions in this theorem are conditional on this assumption.
\end{remark}

\textbf{When does the theorem fail?}
The universality of $\alpha_t = (n+m)/2$ depends on three conditions:
\begin{enumerate}
    \item A two-term cost function (the three-term case naturally produces non-universal $\alpha$).
    \item Power-law transport dissipation in the linear regime, $A(X) \propto X^2$ (fails for turbulent/non-Newtonian flow).
    \item An evolutionary optimization interpretation (fails for geophysical systems without metabolic selection).
\end{enumerate}

\noindent \textbf{The diffusive tracheole limit.} In terminal insect tracheoles, transport is diffusion-dominated ($n=2$). The theoretical prediction drops identically to $\alpha_t = (2+0.5)/2 = 1.25$, far below the trunk value. This correctly predicts the physical divide between convective and diffusive regimes.

\noindent \textbf{River drainage networks.} Exhibiting $\alpha_{\exp} \approx 2.0$ (Horton--Strahler analysis), predicting $\alpha_t = 3.5$ under stream-power models ($n=4$, $m\approx3$). This discrepancy is fundamental; these structures are unconstrained by metabolic or evolutionary optimization.

\noindent \textbf{The role of $\boldsymbol{\alpha_w}$.} In wave-propagating biological conduits, a second attractor $\alpha_w$ competes with $\alpha_t$. The true observable exponent is mathematically bounded by $\alpha_w < \alpha^* < \alpha_t$, resolved exactly through the minimax framework. The general theorem furnishes the rigid thermodynamic upper bound, while wave physics locks the lower bound.

\section{Conclusion}

We have demonstrated that the universal expression $\alpha_t = (n+m)/2$ has
rigorous predictive power when $n$ and $m$ are grounded in independently
measurable biophysics. Across nine biological systems spanning five
phyla---vertebrate vasculature, invertebrate respiration, neural dendrites,
plant transport, and the Porifera---the framework predicts or bounds empirical
branching exponents from independently measured transport parameters. The
validated range spans from the exact tracheal result ($\alpha_t = 2.25$) to the
minimax dendritic solution ($\alpha^* \approx \alphastarNeural{}$) and the wall-corrected
vascular range ($\alpha_t \in [2.80, 2.89]$). Sponge aquiferous canals provide an instructive boundary case: $\alpha_{\exp}
\approx 2 < \alpha_t = 2.5$ is observed, but with $\mathrm{Wo} \ll 1$ the
wave-impedance mechanism is structurally absent; the responsible mechanism is
outside the present framework and remains an open problem
(Section~\ref{sec:sponge_open}).

More fundamentally, we have established that local branching geometry determines
global allometric scaling through the exact relation $\beta(\alpha, d) =
\frac{d\alpha}{2d + \alpha}$. This theorem reveals that Kleiber's 3/4 metabolic
law is not a consequence of Murray's geometry---which produces isometric scaling
$\beta = 1$---but is instead a signature of dynamic wave-impedance matching
($\alpha_w = 2$) in the proximal vasculature. The classical WBE narrative is
inverted: Kleiber's law belongs to the physics of pulsatile wave propagation,
not viscous dissipation minimization.

The framework exposes a hierarchy of physical regimes: pure wave-dominated
networks ($\alpha = 2, \beta = 3/4$), minimax cardiovascular equilibria
($\alpha^* \approx \alphastarPhys{}, \beta \approx \betaMinimaxPhys{}$), and
viscous-dominated or non-pulsatile networks ($\alpha = 3, \beta = 1$). Empirical
deviations from Kleiber's law in small mammals, marine invertebrates, and plants
find a unified mechanistic explanation within this classification.

A structural consequence deserves emphasis. The convergence condition of
Lemma~1, $\alpha < \alpha_c(d) \equiv 2d/(d-1)$, evaluates to $\alpha_c(3) = 3$
exactly---coinciding with Murray's law. The WBE framework assumes $\alpha = 3$
and invokes proximal dominance to derive $\beta = 3/4$; but at $\alpha = 3$ the
geometric series $\sum_k \rho^k$ is marginally divergent ($\rho = 1$), so no
single hierarchical level dominates the volume. The WBE derivation thus
undermines its own geometric foundation (Theorem~\ref{thm:wbe}). By contrast, the
wave-impedance attractor $\alpha_w = 2$ yields $\rho \approx \rhoWaveTwoDp{}$
and a capture fraction of $\VolCaptWaveRound{}\%$ at $K=11$ generations: the
proximal-dominance approximation is maximally accurate precisely in the regime
that produces Kleiber's law.

The framework yields four further results: (i) the Womersley Transition Theorem
(Theorem~\ref{thm:womersley}) predicts a critical body mass \MstarRounded{} at which the metabolic
regime shifts from $\beta = 3/4$ to $\beta \gtrsim \betaSurfaceBound{}$, with no
free parameters; (ii) the Allometric Equation of State $\beta(n,m,d)$ (Theorem~\ref{thm:eos}) maps microscopic transport physics directly to global metabolic scaling
across all biological systems; (iii) Theorem~\ref{thm:nonprotection} establishes rigorously that
$\beta = 3/4$ is \textit{dynamically} protected---it cannot emerge from any
static optimization of a Poiseuille network with biological maintenance costs,
for any value of $m \geq 1$; (iv) Theorem~\ref{thm:wbe} establishes that this dynamic
protection is reinforced by a geometric inconsistency: the WBE framework's own
proximal-dominance assumption fails precisely at $\alpha = 3$.

The three-term cost regime, in which $\Phi(r,X) = A(X)r^{-n} + Br^m
+ Cr^k$ with $m \neq k$ produces non-universal branching exponents,
is now fully resolved: Paper~I~\cite{paperI} establishes that any additive
cost function with two maintenance terms at distinct scaling exponents admits
no universal $\alpha$, making non-universality the generic behaviour and
Murray's law the unique degenerate exception.
Future work will also address the extension of $M^*$ to
non-mammalian clades, the experimental validation of the leaf venation
prediction $\beta(4,1,2) = 10/13$ via photosynthetic rate scaling, and the
metabolic scaling of His-Purkinje tissue as a test of the hidden universality
class $\beta(2,2,3) = 3/4$. The cross-dimensional prediction $\beta = d/(d+1)$
for pulsatile networks in $d=1,2$ dimensions constitutes a further falsifiable
test requiring metabolic measurements in filamentous and planar pulsatile
organisms.

\paragraph*{Data and Code Availability.}
All computation scripts, numerical data, and figure-generation code are openly
and unconditionally available at
\url{https://github.com/rikymarche-ctrl/vascular-networks-theory} under the CC BY 4.0
Licence. No access request is required.

\bibliographystyle{unsrt}
\bibliography{references}

\end{document}

%% file: dynamic_variables.tex

\newcommand{\mCoronary}{1.770}
\newcommand{\mPulmonary}{1.600}
\newcommand{\mBronchial}{1.700}

\newcommand{\mNeuron}{1.000}
\newcommand{\mTrachea}{0.500}
\newcommand{\mXylemLo}{1.000}
\newcommand{\mXylemHi}{2.000}
\newcommand{\mLeaf}{1.000}
\newcommand{\mSponge}{1.000}

\newcommand{\atCoronary}{2.885}
\newcommand{\atPulmonary}{2.800}
\newcommand{\atCerebral}{2.885}
\newcommand{\atBronchial}{2.850}
\newcommand{\atNeuron}{2.500}
\newcommand{\atTrachea}{2.250}
\newcommand{\atXylemLo}{2.500}
\newcommand{\atXylemHi}{3.000}
\newcommand{\atLeaf}{2.500}
\newcommand{\atSponge}{2.500}

\newcommand{\atFishGill}{2.500}
\newcommand{\atConstructal}{1.500}
\newcommand{\atLymphatic}{2.500}
\newcommand{\atCoral}{2.250}

\newcommand{\betaSurfaceBound}{0.882}

\newcommand{\omegaRef}{7.330}
\newcommand{\WoRef}{18.919}
\newcommand{\Mstar}{8.74}
\newcommand{\MstarRounded}{\ensuremath{\approx 8.74~\mathrm{g}}}

\newcommand{\betaFourTwoThree}{1.000}
\newcommand{\betaFourOneThree}{0.882}
\newcommand{\betaFourHalfThree}{0.818}
\newcommand{\betaTwoHalfThree}{0.517}
\newcommand{\betaFourOneTwo}{0.769}

\newcommand{\atTracheoleDiff}{1.250}

\newcommand{\betaLeaf}{0.769}

\newcommand{\betaLeafEff}{0.784}
\newcommand{\Deff}{2.1}

\newcommand{\alphastar}{2.720}
\newcommand{\alphastarNeural}{2.380}

\newcommand{\TcumMinimax}{0.912}
\newcommand{\TcumMurray}{0.864}
\newcommand{\TlossMinimax}{8.830}
\newcommand{\TlossMurray}{13.626}

\newcommand{\betaDimOne}{0.500}
\newcommand{\betaDimTwo}{0.667}
\newcommand{\betaDimThree}{0.750}
\newcommand{\betaDimFour}{0.800}

\newcommand{\betaSfourFive}{0.818}


\newcommand{\rhoWave}{0.794}

\newcommand{\rhoAlphaWave}{0.794}
\newcommand{\epsAlphaWave}{0.063}
\newcommand{\volAlphaWave}{93.8}
\newcommand{\rhoAlphaNeuralMinimax}{0.887}
\newcommand{\epsAlphaNeuralMinimax}{0.236}
\newcommand{\volAlphaNeuralMinimax}{76.4}
\newcommand{\rhoAlphaVascMinimax}{0.954}
\newcommand{\epsAlphaVascMinimax}{0.565}
\newcommand{\volAlphaVascMinimax}{43.5}
\newcommand{\rhoWaveTwoDp}{0.79}
\newcommand{\VolCaptWaveRound}{94}

\newcommand{\MstarWocLo}{2.8}
\newcommand{\MstarWocHi}{21.3}
\newcommand{\MstarWocLoVal}{1.5}
\newcommand{\MstarWocHiVal}{2.5}

\newcommand{\betaAlphaWTwoPointOne}{0.778}
\newcommand{\deltaBetaAlphaWTwoPointOne}{0.028}

\newcommand{\mHP}{2.000}
\newcommand{\atHP}{2.000}
\newcommand{\betaHP}{0.750}
\newcommand{\betaHPfrac}{3/4}

\newcommand{\rZeroOneKg}{2.541}
\newcommand{\MstarBird}{23.0}
\newcommand{\WoRefBird}{5.135}
\newcommand{\MstarReptLo}{217}
\newcommand{\MstarReptHi}{956}
\newcommand{\WoRefReptLo}{2.022}
\newcommand{\WoRefReptHi}{2.929}
\newcommand{\betaViscousReptile}{0.974}

\newcommand{\alphastarPhys}{2.77}
\newcommand{\betaMinimaxPhys}{0.947}
\newcommand{\fstarPhys}{9.7}

\newcommand{\rZeroPulm}{15}
\newcommand{\omegaPulm}{7.54}
\newcommand{\WoPulm}{24}
\newcommand{\alphastarPulm}{2.688}
\newcommand{\fstarPulm}{8.28}

\newcommand{\epsilonCerebralLow}{0.15}
\newcommand{\epsilonCerebralHigh}{0.2}
\newcommand{\alphaStarCerebralLow}{2.863}
\newcommand{\alphaStarCerebralHigh}{2.856}
\newcommand{\alphaStarCerebral}{2.859}